\newtheorem{theo}{Theorem}[]
\newtheorem{cor}[theo]{Corollary}
\newtheorem{lem}[theo]{Lemma}
\newtheorem{prop}[theo]{Proposition}
\newtheorem{defn}[theo]{Definition}
\newtheorem{remark}[theo]{Remark}
\newcommand{\R}{\mathds{R}}
\newcommand{\N}{\mathds{N}}
\newcommand{\I}{\mathds{I}}
\newcommand{\keywords}[1]{\par\addvspace\baselineskip\noindent\textbf{Keywords:}\enspace\ignorespaces#1}
\newcommand{\AMSclassification}[1]{\par\addvspace\baselineskip\noindent\textbf{Mathematical subject classification:}\enspace\ignorespaces#1}
\newcommand{\acknowledgement}[1]{\par\addvspace\baselineskip\noindent\textbf{\small{Acknowledgement.}}\enspace\ignorespaces\small#1}
\title{Average sex ratio and population maintenance cost}
\author{
\small{Eduardo Garibaldi}\\
\footnotesize{UNICAMP -- Departamento de Matem\'atica}\\
\footnotesize{13083-859 Campinas - SP, Brasil}\\
\footnotesize{\texttt{garibaldi@ime.unicamp.br}}
\and
\small{Marcelo Sobottka}\\
\footnotesize{UFSC -- Departamento de Matem\'atica}\\
\footnotesize{88040-900 Florian\'opolis - SC, Brasil}\\
\footnotesize{\texttt{sobottka@mtm.ufsc.br}}
}
\date{}
\begin{document}

\maketitle

\begin{abstract}
The ratio of males to females in a population is a meaningful characteristic of sexual species. The reason
for this biological property to be available to the observers of nature seems to be a question never asked.
Introducing the notion of historically adapted populations as global minimizers of maintenance cost functions,
we propose a theoretical explanation for the reported stability of this feature. This mathematical formulation
suggests that sex ratio could be considered as an indirect result shaped by the antagonism between the size of the population and the finiteness of resources.

\keywords{sex ratio, cost function, finite-resource environment, population dynamics.}

\AMSclassification{37N25, 92D15, 92D25.}
\end{abstract}

\bigskip
\hrule
\noindent
{\footnotesize\em This is a pre-copy-editing, author-produced PDF of an article accepted for publication in SIAM Journal on Applied Mathematics, following peer review. The definitive publisher-authenticated version {\em E. Garibaldi and M. Sobottka. Average sex ratio and population maintenance cost. SIAM Journal on Applied Mathematics (2011), 71, 1009-1025, doi:10.1137/100817310 }, is available online at: http://epubs.siam.org/doi/10.1137/100817310 .}
\hrule
\bigskip

\section{Introduction}

\begin{quotation}
\sffamily{\small{
``I formerly thought that when a tendency to produce the two sexes in equal numbers was advantageous to the species,
it would follow from natural selection, but I now see that the whole problem is so intricate that it is safer to leave
its solution for the future.''
}}

\vspace{.2cm}

\sffamily{\small{
Charles Darwin in \emph{The descent of man} \cite{Darwin}.
}}
\end{quotation}

It was reported \cite{Vollrath} about five females for each male in Panamaniam colonies of spider \emph{Anelosimus
eximius}\sloppy. This arachnid species has developed social groupings with overlap of generations, cooperation in care
of young and reproductive division of labour. Male-biased sex ratio has been observed \cite{ABGCT} in French populations
of marmot \emph{Marmota marmota}. This socially monogamous mammal is a cooperative breeding species and subordinate
males participate in social thermoregulation during winter. In a territorial bird species, the Seychelles warbler
\emph{Acrocephalus seychellensis}, it was identified \cite{Komdeur} a facultative adjustment of offspring sex ratio.
The role of daughters as helpers in raising subsequent broods and the quality of a territory classified according to
its size, the density of insect prey and the amount of foliage are factors that explain the sex ratio shift
in offsprings, from mainly females on high-quality territories to mainly males on low-quality territories.

Sex-ratio studies form a fascinating topic in evolutionary biology, which underline the impact of natural selection on
the allocation of resources to male and female progeny. Using a frequency-dependent argument, Fisher provided
\cite{Fisher} a theoretical explanation for the prevalence of near 1:1 sex ratio under natural selection.
The effort to understand the stability of biased sex ratios has enabled the central theory to find successive
and fruitful extensions. For instance, Hamilton's local mate competition hypothesis \cite{Hamilton} was
originally introduced to clarify how the interactions between siblings produce very female-biased sex ratios
in parasitic wasps.

It has been very useful in sex-ratio theory the point of view which consists in describing collective phenomena from
the actions and expectations of individuals. Charnov's book on sex allocation \cite{Charnov} is an extremely successful
illustration of this tendency. By considering non-linear and unequal returns from parental investment in sons and
daughters, Charnov has developed a nice mathematical formulation, able not only to conceptually explain cases of both
Fisherian and non-Fisherian sex ratios, but also to provide predictions to be tested in experiments. Another
example of a fundamental contribution from the philosophical approach based on methodological individualism is the
so-called Triver-Willard hypothesis \cite{TW}, which suggests that natural selection should favor parental ability
to adjust the sex ratio of their offspring in response to environmental conditions.

The focus on individual behavior leads to the important discussion about selection criteria for reproductive strategies.
Nowadays questions arising from parent-offspring conflict, parental investment, sibling antagonism and mate choice may
be mathematically addressed by evolutionary game dynamics (see, for instance, \cite{HK}).

We will adopt a different point of view, which consists mostly in a global perspective
by proposing a population-based optimization model. As any general model, this mathematical formulation will have
mainly heuristic purposes. Focusing on the entire population as a dynamical agent without directly paying attention to
specific biological parameters, the consideration of an implicit maintenance cost function will give qualitative
insights for a common biological feature: an observable sex ratio. As a matter of fact, our main result will argue in
favor of the hypothesis that the very possibility of a sex ratio being recognized in the nature may reflect a balance
between the size of the population and the finiteness of resources.

The mathematical techniques developed here have foundations in common with the variational
theory applied to the study of ground-states of generalized Frenkel-Kontorova models on a one-dimensional crystal
(see, for example, \cite{ALeD,Bangert}). Actually, statistical physics methods
have been already successfully exploited in evolutionary games on graphs, specially when
social networks are seen as the result of individual interactions governed by some kind of
interdependency, such as sexual relationships (see, for example, \cite{SF}).

In order to be more concrete, suppose we periodically census the size of each gender in some biological population.
Let $\binom{x_i}{y_i}\in\N^2$ be the $(i + 1)^{th}$ census, where $x_i$ and $y_i$ indicate the number of females and males, respectively.
An infinite list $\omega=(\binom{x_0}{y_0}, \binom{x_1}{y_1}, \binom{x_2}{y_2}, \ldots )$ can be
viewed as a possible (yet maybe unlikely) historical record of each gender of a particular population.
Obviously, not all $\omega $ has a biological meaning: this could be the case, for instance, of
$\omega=(\binom{1}{0}, \binom{0}{0}, \binom{0}{0}, \binom{0}{99}, \binom{99}{0},\ldots)$.
Hence one evidently needs some criteria to select among all registers
those which may indeed represent a possible history of some population. This can be obtained by considering a function
which associates some maintenance cost for any finite register of a population history. Such a cost function shall
necessarily capture chief features\footnote{In particular, it must assign a high cost to finite population histories
which should be unlikely.} of the biological population to be modeled.

A cost function leads us to the notion of historically adapted populations, which intuitively
correspond to those populations more efficient in the use of available resources. The concept
of ``historically adapted population'' shall not be understood as ``survival of the fittest''.
In fact, we are not focusing on competition either between species or among individuals, but only
looking for the optimal rates for each gender in populations under certain environmental conditions.
In particular, we neither claim that actual populations are historically adapted nor try to explain biological mechanisms which could lead a population to be historically adapted. Even so,
the mathematical proof of a kind of
abundance of historically adapted populations with an identifiable sex ratio might insinuate why sex-ratio random variations in a given population
seem to be a very rare phenomenon in nature.

The paper is organized as follows. In section~\ref{basics}, we present the mathematical
model we shall study. In section~\ref{ExistenceHistoricalyAdaptedPop}, we introduce the notion of historically adapted
populations and show some of their properties. In section~\ref{AverageSexRatioSection}, we present arguments for the
existence of an asymptotic average sex ratio for historically adapted populations. Concluding remarks are discussed
in section~\ref{Discussion}. In appendix A, one can find the mathematical proofs of the results used
along the paper.

\section{The model}\label{basics}

In this section, we shall present a mathematical formulation to model two-sex populations
in a finite-resource environment. Denote then the set of all nonnegative integers by $\N$. Define
\begin{equation*}\label{StateSpace} \Omega:=\left(\N^2\right)^\N:=\left\{\binom{x_i}{y_i}_{i\in\N}:\ x_i, y_i \in \N,\ \forall \, i\in\N\right\}.\end{equation*}
The elements of $\Omega$ will be called the (possible) histories for the population. Each history $\omega\in\Omega$ can be interpreted as a list of consecutive censuses of female and male populations. Given $\omega=\binom{x_i}{y_i}_{i\in\N}\in\Omega$ and $m,n\in\N$ with $m\leq n$, we set $\omega[m]:=\binom{x_m}{y_m}$ and $\omega{[m,n]}:=\binom{x_i}{y_i}_{m\leq i\leq n}$, which are the restrictions of the infinite history $\omega$ to the moment $m$ and to the finite history from the moment $m$ until the moment $n$, respectively.

We will now define a class of cost functions which shall be used to select those censuses that may
in fact be realized. First of all, we would like to emphasize that, although we will explicitly express only the
dependence on population sizes, a maintenance cost function must depend on several biological and physical variables. We will just omit this multiple dependence in our analysis. Thus, let $C:\N^2\times\N^2\to \R$ be a function bounded from below, which means
\begin{equation}\label{BoundedBelow}
\inf_{\left(\binom{x}{y}, \binom{\bar x}{\bar y}\right) \in \N^2\times\N^2} \; C \left(\binom{x}{y}, \binom{\bar x}{\bar y}\right)>-\infty.
\end{equation}
The value $ C (\binom{x}{y}, \binom{\bar{x}}{\bar{y}}) $ shall be interpreted as the maintenance cost to have a
population with $x$ females and $y$ males, followed by a population with $\bar{x}$ females and $\bar{y}$ males. In
particular, we are assuming that the maintenance cost takes into account only two successive population censuses. This
assumption is made for simplicity and can be justified by observing that this model captures the main features of the
general case, when the maintenance cost is a function of a finite number of consecutive
censused-population sizes (see section~\ref{Discussion}).

In a finite-resource environment, it is reasonable to assume that, uniformly and independently
of the initial population size, the cost to generate and maintain a new population tends to infinity as its
size increases. In mathematical terms, the latter hypothesis can be expressed as follows
\begin{equation}\label{FiniteRessources1}
\lim_{\bar x + \bar y \, \to \, + \infty} \; \inf_{\binom{x}{y} \in \N^2} \; C \left(\binom{x}{y}, \binom{\bar x}{\bar y}\right) = +\infty.
\end{equation}

On the other hand, it is also reasonable to suppose that the population maintenance cost is, in some sense, more
affected by the current population than by the former one. Roughly speaking, such an assumption means that, although
the cost for a small initial population generating a very numerous new one may be high, the maintenance of a numerous
population has a high cost by itself, independently of its previous size. Therefore, we shall assume that there exists
a constant $ \mathfrak K_C >0 $ such that the cost of having $\binom{\bar x}{\bar y}$ in any census does not vary more
than $ \mathfrak K_C $ as a function of the possible values for the former population, or more precisely, we assume
\begin{equation}\label{FiniteRessources2}
\mathfrak K_C :=\sup_{\binom{\bar x}{\bar y} \in \N^2}
\left[ \sup_{\binom{x}{y} \in \N^2} C \left(\binom{x}{y}, \binom{\bar x}{\bar y}\right) -
\inf_{\binom{x}{y} \in \N^2} C \left(\binom{x}{y}, \binom{\bar x}{\bar y}\right) \right] < +\infty.
\end{equation}

\section{Historically adapted populations}\label{ExistenceHistoricalyAdaptedPop}

The intuitive idea is that a historically adapted population should minimize the maintenance cost along the time.
Although, in most of the cases, there is no meaning in talking about a minimum cost for infinite histories,
the idea of histories minimizing the cost along the time will be useful. As a matter of fact, this heuristic
motivation will allow to highlight a central functional equation that will lead us to a rigorous definition of
historically adapted populations.

\subsection{Heuristic motivation}\label{Heuristic}

In order to explore heuristically a definition of historically adapted populations,
consider that the function $ C $ is nonnegative\footnote{Mathematically there is no loss of
generality with such an assumption, since $C$ is bounded from below.}.
Note that the total maintenance cost of a population history $\bar{\omega}=\binom{\bar{x}_i}{\bar{y}_i}_{i\in\N}$
is given by $\sum_{k\ge 1}C(\binom{\bar{x}_{k-1}}{\bar{y}_{k-1}},\binom{\bar{x}_k}{\bar{y}_k})$, which
may clearly diverge. Assume for now that there exists a history with finite
total maintenance cost (that is, for which the series converges). Thus, the
smallest total cost for some history beginning from a given initial population $\binom{x_0}{y_0}$
is just
\begin{equation}\label{TotalCost1}
u\binom{x_0}{y_0}:=\inf_{\left(\binom{x_1}{y_1},\binom{x_2}{y_2},\ldots\right)}\left[\sum_{k\ge 1}C\left(\binom{x_{k-1}}{y_{k-1}},\binom{x_k}{y_k}\right)\right].
\end{equation}

Since we are assuming that the total cost is finite for some history, then $u\binom{x_0}{y_0}\in\R$ for any $\binom{x_0}{y_0}\in\N^2$.
Moreover, as the cost function $C$ is supposed to be nonnegative, obviously $u \ge 0$ everywhere.
Now, note that the above equation can be rewritten as
\begin{eqnarray}\label{TotalCost2}
\displaystyle u\binom{x_0}{y_0}
&=&\displaystyle \inf_{\binom{x_1}{y_1}} \; \inf_{\left(\binom{x_2}{y_2},\binom{x_3}{y_3},\ldots\right)}\left[C\left(\binom{x_0}{y_0},\binom{x_1}{y_1}\right)+\sum_{k\ge 2}C\left(\binom{x_{k-1}}{y_{k-1}},\binom{x_k}{y_k}\right)\right] \nonumber \\
&=&\displaystyle \inf_{\binom{x_1}{y_1}} \left[C\left(\binom{x_0}{y_0},\binom{x_1}{y_1}\right)+\inf_{\left(\binom{x_2}{y_2},\binom{x_3}{y_3},\ldots\right)}\sum_{k\ge 2}C\left(\binom{x_{k-1}}{y_{k-1}},\binom{x_k}{y_k}\right)\right] \nonumber \\
&=&\displaystyle \inf_{\binom{x_1}{y_1}} \left[C\left(\binom{x_0}{y_0},\binom{x_1}{y_1}\right)+u\binom{x_1}{y_1}\right].
\end{eqnarray}
Hypothesis~\eqref{FiniteRessources1} implies that the infimum in the above equation is actually a minimum.
Therefore, for each $\binom{x_0}{y_0}\in\N^2$, there exists $\binom{x_1}{y_1}\in\N^2$ such that
\begin{equation}\label{TotalCost3}
 u\binom{x_0}{y_0}= C\left(\binom{x_0}{y_0},\binom{x_1}{y_1}\right)+u\binom{x_1}{y_1}.
\end{equation}

One has now a recursive way to construct an interesting history. Indeed, given
an initial population $\binom{x_0}{y_0}$, we find $\binom{x_1}{y_1}$ which satisfies~\eqref{TotalCost3} and,
inductively, from the population $\binom{x_{i-1}}{y_{i-1}}$ at the moment $ i - 1 $, we obtain a population $\binom{x_i}{y_i}$ at the
subsequent moment such that $u\binom{x_{i-1}}{y_{i-1}}= C(\binom{x_{i-1}}{y_{i-1}},\binom{x_i}{y_i})+u\binom{x_i}{y_i}$.
Since its total maintenance cost is equal to the smallest one we could expect for some history beginning from $\binom{x_0}{y_0}$,
the history $\omega=\binom{x_i}{y_i}_{i\in\N}$ constructed by the above procedure will be called an adapted history.
We remark that there is not necessarily uniqueness and there might exist infinitely many adapted histories starting
from a given initial population.

\subsection{Rigorous definition}\label{FormalProof}

The existence of histories with finite total maintenance cost is a very strong demand made for our heuristic definition
of historically adapted populations. Besides being a tremendous restriction for the model, such an assumption implies
counterintuitively that the maintenance cost of these populations vanishes as time goes by.
Anyway, it is not totally naive the observation that an adapted history should be one for which in some sense $C(\binom{x_{k-1}}{y_{k-1}},\binom{x_k}{y_k})$ goes to the infimum of the cost function $C$ as $k$ tends to $+\infty$.
More important, the previous heuristic discussion leads us to propose a general definition of adapted histories,
which extends the intuitive idea of global minimizing histories for situations where the notion of
finite total maintenance cost has no meaning.

It is straightforward that whenever a function $u:\N^2\to\R$ is bounded from below and verifies
an equation like \eqref{TotalCost2}, we can use it to construct adapted histories.
If the maintenance cost function $C$ satisfies hypotheses~\eqref{BoundedBelow}, \eqref{FiniteRessources1}
and~\eqref{FiniteRessources2}, then one can show that there exist a bounded function\footnote{Such
a function is a fixed point for a kind of Lax-Oleinik operator (see Definition~\ref{Lax-Oleinik},
Theorem~\ref{TheoFixedPoint} and Remark~\ref{TheoFixedPoint2} in appendix A).} $u:\N^2\to\R$ and a real constant
$\gamma$ (both depending on the function $C$) such that
\begin{equation}\label{FixedPoint}
 u\binom{x}{y}=\inf_{\binom{\bar x}{\bar y} \in \N^2} \left[C\left(\binom{x}{y},\binom{\bar x}{\bar y}\right)+u\binom{\bar x}{\bar y}\right]-\gamma,
 \quad \text{ for all } \; \binom{x}{y} \in \N^2.
\end{equation}

Since $C$ satisfies \eqref{FiniteRessources1} and $u$ is bounded, we can again deduce that for each $\binom{x_0}{y_0}$
there exists $\binom{x_1}{y_1}$ such that $ u\binom{x_0}{y_0}=C(\binom{x_0}{y_0},\binom{x_1}{y_1})+u\binom{x_1}{y_1}-\gamma $.
Hence, given any initial population $\binom{x_0}{y_0}$, we define in a recursive way adapted histories starting
from $\binom{x_0}{y_0}$ as we have made in section~\ref{Heuristic}.

\begin{defn}\label{AdaptedHistory}
Let $u:\N^2\to\R$ be a bounded function satisfying the functional
equation~\eqref{FixedPoint} for some constant $\gamma$. Then $\omega=\binom{x_i}{y_i}_{i\in\N}$ is said
to be an adapted history for the maintenance cost $C$ if
\begin{equation}\label{DefinicaoPopHistAdap}
u\binom{x_{i-1}}{y_{i-1}}=C\left(\binom{x_{i-1}}{y_{i-1}},\binom{x_i}{y_i}\right)+u\binom{x_i}{y_i}-\gamma,
\quad \text{ for all } \; i\ge 1.
\end{equation}
\end{defn}

We remark that in this context the quantity $u\binom{x_0}{y_0}$ is not necessarily given by the
expression~\eqref{TotalCost1} and then it cannot be interpreted as the smallest total cost we would expect for any history starting from $\binom{x_0}{y_0}$. Anyway, if $\omega=\binom{x_i}{y_i}_{i\in\N}$ is an adapted history,
then it is easy to see that its average maintenance cost tends to $\gamma$ as time goes by, or in mathematical terms
\begin{equation}\label{MeanMaintenanceCost}
 \lim_{n\to\infty}\frac{1}{n}\sum_{k=1}^nC\left(\binom{x_{k-1}}{y_{k-1}},\binom{x_k}{y_k}\right)=\gamma.
\end{equation}
Therefore, it follows from the functional equation~\eqref{FixedPoint} that
$\gamma$ can be interpreted as the minimum asymptotic average maintenance cost we can expect for arbitrary
histories, and this minimum value $ \gamma $ is necessarily attained by any adapted history.
Thus, even without uniqueness of adapted histories, we have uniqueness of the quantity $\gamma$ (see Remark~\ref{TheoFixedPoint2} in appendix A).

If $\omega=\binom{x_i}{y_i}_{i\in\N}$ is an adapted history, we can recover the global minimizing property,
since any finite history $\omega{[m,n]}=\binom{x_i}{y_i}_{m\leq i\leq n}$ minimizes the maintenance cost among all
finite histories with the same initial and final population\footnote{In the language of solid state physics, such a
property means that the adapted histories behave like ground-states of one-dimensional crystal models (see, for
instance, \cite{ALeD}).}. More precisely, for all $ m,n\in\N $ with $ m < n $ and for any other population history
$ \bar{\omega} = \binom{\bar x_i}{\bar y_i}_{i\in\N} \in \Omega $ verifying $ \omega [m] = \bar{\omega} [m] $ and $ \omega [n] = \bar{\omega} [n] $, it follows
that
\begin{align*}
\sum_{k = m + 1}^n & C\left(\binom{x_{k-1}}{y_{k-1}},\binom{x_k}{y_k}\right)
=_{(1)} \sum_{k = m + 1}^n \left[ u \binom{x_{k - 1}}{y_{k - 1}} - u \binom{x_k}{y_k} +\gamma\right] \\
& =_{(2)} u \binom{x_m}{y_m} - u \binom{x_n}{y_n} +(n-m)\gamma
=_{(3)} \sum_{k = m + 1}^n \left[ u \binom{\bar{x}_{k - 1}}{\bar{y}_{k - 1}} - u \binom{\bar{x}_k}{\bar{y}_k} +\gamma \right] \\
& =_{(4)} \sum_{k = m + 1}^n \left[ \inf_{\binom{\bar x}{\bar y}}
\left[C\left(\binom{\bar x_{k-1}}{\bar y_{k-1}},\binom{\bar x}{\bar y}\right)+u\binom{\bar x}{\bar y}\right]- u \binom{\bar x_k}{\bar y_k} \right]
\le_{(5)} \sum_{k = m + 1}^n C\left(\binom{\bar x_{k-1}}{\bar y_{k-1}},\binom{\bar x_k}{\bar y_k}\right),
\end{align*}
where: $=_{(1)}$ is due to~\eqref{DefinicaoPopHistAdap}; $=_{(2)}$ comes from a telescopic series;
$=_{(3)}$ follows again from a telescopic series, as well as from the fact that $ \omega [m] = \bar{\omega} [m] $ and
$ \omega [n] = \bar{\omega} [n] $; $=_{(4)}$ is due to~\eqref{FixedPoint}; and $\le_{(5)}$ follows from
$ \inf_{\binom{\bar x}{\bar y}} [C(\binom{\bar x_{k-1}}{\bar y_{k-1}},\binom{\bar x}{\bar y})+u\binom{\bar x}{\bar y}] \le
C(\binom{\bar x_{k-1}}{\bar y_{k-1}},\binom{\bar x_k}{\bar y_k})+u\binom{\bar x_k}{\bar y_k}$.

A more pertinent point about historically adapted populations is that, as we will see in the next section,
they provide a theoretical argument in favor of the hypothesis of prevalence of stable sex
ratios for populations under stable environmental conditions.

\section{On the existence of the asymptotic average sex ratio}\label{AverageSexRatioSection}

Investigating the identification of a sex ratio, we will find out that when the population maintenance
cost takes into account the gender proportions, then a sex ratio will be observed in historically adapted
populations. In fact, since those populations are global minimizers of cost functions, one may argue that
a sex ratio will emerge as a consequence of the finiteness of available resources whenever gender densities
have a linear influence on the maintenance cost.

The gender proportions of the $ (i + 1)^{th} $ censused population $ \binom{x_i}{y_i} $ correspond
obviously to the quantities $ x_i/(x_i + y_i) $ and $ y_i/(x_i + y_i) $. Nevertheless, we have seen that there
exists at least one historically adapted population starting from any arbitrary population. Such a fact means that all first value for a sex ratio
can be then achieved and it shows that the analysis of initial data may be ineffective. Anyway, one can still ask whether a kind of sex-ratio equilibrium
will be reached in the latter generations. Mathematically, it corresponds to looking for some asymptotic sex ratio, that is, given an adapted history $\omega=\binom{x_i}{y_i}_{i\in\N}$, to ask for the existence of the limits
\begin{equation*}\label{AsymptoticSexRatio}
\lim_{i\to\infty}\frac{x_i}{x_i+y_i} \qquad \text{and} \qquad \lim_{i\to\infty}\frac{y_i}{x_i +y_i}.
\end{equation*}

The historically-adapted-population approach does not guarantee that convergence,
but it will allow to assure the convergence in average of sex ratio, namely, the existence of the limits
\begin{equation*}\label{AverageSexRatio}
\lim_{n\to\infty}\frac{1}{n}\sum_{k=1}^n\frac{x_k}{x_k+y_k}
\qquad \text{and} \qquad
\lim_{n\to\infty}\frac{1}{n}\sum_{k=1}^n\frac{y_k}{x_k+y_k}.
\end{equation*}
More important, if the sex ratio converges in average, then there are infinitely many arbitrarily long periods of
time for which it remains as close as one wants to the average limit.
This mathematical property might therefore explain the documented stability of sex ratio in the nature.

\subsection{Linearly perturbed maintenance cost functions}

Let $ C: \N^2 \times \N^2 \to \R $ be a cost function verifying hypotheses~\eqref{BoundedBelow}, \eqref{FiniteRessources1}
and~\eqref{FiniteRessources2}. Given a vector $ A=(a_1,a_2)\in\R^2 $, the linearly perturbed maintenance cost function with weights $a_1$ and $a_2$ on the gender densities is the function $C_A:\N^2\times\N^2\to \R$ given by
\begin{align*}
C_A\left(\binom{x}{y}, \binom{0}{0}\right) & = C\left(\binom{x}{y}, \binom{0}{0}\right), \quad \text{and} \\
C_A\left(\binom{x}{y}, \binom{\bar x}{\bar y}\right) & = C\left(\binom{x}{y}, \binom{\bar x}{\bar y}\right) +
a_1\frac{\bar{x}}{\bar{x}+\bar{y}}+a_2\frac{\bar{y}}{\bar{x}+\bar{y}}, \quad \text{if } \; \bar x + \bar y > 0.
\end{align*}

Notice that $a_1$ and $a_2$ assign cost (or benefits if negative) on the latest gender densities of the population.
Besides, the original maintenance cost function $C$ could include linear and nonlinear feedbacks for the gender densities. We shall study the asymptotic average sex ratio for historically adapted populations
with respect to perturbed maintenance cost functions in the above form.

It is straightforward that $C_{(0,0)}=C$. Besides, $ C_A $ converges uniformly to $ C $ as the vector $A$ tends to $(0,0)$
(see \eqref{C-distance}). Much more crucial is the fact that hypotheses \eqref{BoundedBelow}, \eqref{FiniteRessources1} and
\eqref{FiniteRessources2} also hold for the perturbed cost $C_A$. Thus, we can apply the result of
section~\ref{FormalProof}
to get the existence of $\omega^A=\binom{x^A_i}{y^A_i}_{i\in\N} \in \Omega$, an adapted history for the maintenance cost $C_A$.
Therefore, for a fixed cost function $ C $, we can consider the map $ \Gamma_C : \R^2 \to \R $ given by
$$ \Gamma_C(A) =
\lim_{n\to\infty}\frac{1}{n}\sum_{k=1}^n C_A\left(\binom{x^A_{k-1}}{y^A_{k-1}},\binom{x^A_k}{y^A_k}\right),
\quad \text{ for all } \; A \in \R^2. $$
(The above function is well defined due to \eqref{MeanMaintenanceCost}.)

Notice that if $\gamma$ is the minimum asymptotic average maintenance cost with respect to $C$,
then clearly $\Gamma_C(0,0)=\gamma$. Furthermore, one can easily show that $\Gamma_C$ is a concave
application, which, in particular, means that $\Gamma_C$ is continuous everywhere and differentiable
almost everywhere with respect to the Lebesgue measure\footnote{See Proposition~\ref{Concave} and Remark~\ref{Concave2}
in appendix A.}.
As a matter of fact, one may say a little more on the differentiable behavior of the map $ \Gamma_C $, since one can show that
\begin{equation}\label{Gama e f}
\Gamma_C(a_1, a_2) = f_C(a_1 - a_2) + \gamma + a_2, \quad \forall \; (a_1, a_2) \in \R^2,
\end{equation}
where $ f_C : \R \to \R $ is a concave function such that $ f_C (0) = 0 $ (see Remark~\ref{Concave2}). Thus, for almost all $ \Delta \in \R $, the map
$ \Gamma_C $ is actually differentiable along the straight line $ \{(a, a - \Delta) : a \in \R\} $.

The points of differentiability of $ \Gamma_C $ are essential for the discussion on the existence of an asymptotic average sex ratio
for historically adapted populations. Let then $\nabla \Gamma_C(A)$ denote the gradient vector of the function $\Gamma_C$ at
the point $A \in \R^2$. We are able to show that, whenever $ A = (a_1, a_2) $ is a point of differentiability of $ \Gamma_C $, one necessarily has
\begin{equation}\label{AverageSexRatio2}
\lim_{n \to \infty} \frac{1}{n}\sum_{k = 1}^n\left(\frac{x^A_k}{x^A_k + y^A_k}, \frac{y^A_k}{x^A_k + y^A_k}\right) = \nabla \Gamma_C(A) =
\big(f_C'(a_1 - a_2), 1 - f_C'(a_1 - a_2)\big),
\end{equation}
for any adapted history $\omega^A=\binom{x^A_i}{y^A_i}_{i\in\N} \in \Omega$ with respect to the maintenance cost function $ C_A $.
This is precisely the statement of Theorem~\ref{AverageSexRatioTheorem} in appendix A.

First of all, \eqref{AverageSexRatio2} means that, for almost all linearly perturbed maintenance cost function, the respective
historically adapted populations do share the same asymptotic average sex ratio. Moreover, this common value is constant for
each family of cost functions $ \{C_{(a, a - \Delta)}\}_{a \in \R} $, when $ \Delta $ is a point of differentiability of $ f_C $.
In particular, since $f_C$ is concave, then $f_C'$ is non-increasing whenever is defined. Therefore, if $ \Delta_1 < \Delta_2 $
are two points of differentiability of $ f_C $, then the asymptotic average female proportion in historically adapted populations
with respect to $ C_{(a, a - \Delta_1)} $ will not be less than the one in historically adapted populations with respect to
$ C_{(b, b - \Delta_2)} $.

The convergence in average of the sex ratio has a main consequence: a kind of stability along time for this biological feature.
In fact, a simple lemma (see Lemma~\ref{LemaAnalise}) assures that there will exist infinitely many arbitrarily long periods of time
for which sex ratios must be as close as desired to the average limit. In more mathematical terms, suppose $ A = (a_1, a_2) \in \R^2 $ is a
point of differentiability of the map $ \Gamma_C $, and $\omega^A=\binom{x^A_i}{y^A_i}_{i\in\N} \in \Omega$ is an adapted history with respect
to the maintenance cost function $ C_A $. Hence, given $ \epsilon > 0 $ arbitrarily small and $ M > 0 $ as large as one wants, there are infinitely
many finite histories $ \omega^A [m, n] = \binom{x^A_i}{y^A_i}_{m \le i \le n} $ such that $ n - m \ge M $ and
$$ \left | \sum_{k = m + 1}^n \left( \frac{x_k^A}{x_k^A + y_k^A} - f_C'(a_1 - a_2) \right) \right | < \epsilon. $$
This property might clearly provide a reasonable explanation for the observed stability of sex ratio in nature, underlying the major
role of the non-trivial interaction between a finite-resource environment and the growth and maintenance of a two-sex population.

\subsection{From perturbed to non-perturbed maintenance costs}

We have an asymptotic average sex ratio for historically adapted populations with respect to almost all
linearly perturbed maintenance cost function. However, $\Gamma_C$ may be non-differentiable along
countably many straight lines $ \{(b, b - \Delta) : b \in \R\}$, and the limit~\eqref{AverageSexRatio2}
is only guaranteed if $ A \in \R^2 $ is a point of differentiability of $\Gamma_C$. Note that this limit
could exist for some point of non-differentiability of $ \Gamma_C $, but
the argumentation in the proof of Theorem~\ref{AverageSexRatioTheorem} cannot be used
to decide if this is the case. In particular, one cannot assure that $\Gamma_C$ is differentiable at $ A=(0,0) $,
which would imply the existence of an asymptotic average sex ratio for
historically adapted populations with respect to the original non-perturbed maintenance cost function $ C $.

Although we cannot always guarantee the existence of an asymptotic average sex ratio, the characterization of
$\Gamma_C$ given in~\eqref{Gama e f} allows us to deduce that even in the worst case the average sex ratio takes
values in some fixed interval (see Theorem~\ref{NonDifferentiableCase}).
As a matter of fact, for all $ B = (b, b - \Delta) \in\R^2 $,
there exist real constants $ L_{\Delta} $ and $ R_{\Delta} $ (depending only on the cost function $ C $ and on the real parameter $ \Delta $) such that
\begin{equation*}
0 \le L_{\Delta} \le \liminf_{n \to \infty} \frac{1}{n}\sum_{k = 1}^n\frac{x^B_k}{x^B_k + y^B_k}
\le \limsup_{n \to \infty} \frac{1}{n}\sum_{k = 1}^n\frac{x^B_k}{x^B_k + y^B_k}\le R_{\Delta} \le 1,
\end{equation*}
whenever $ \omega^B=\binom{x^B_i}{y^B_i}_{i\in\N} \in \Omega $ is an adapted history for the maintenance cost function $ C_B $.

For any points $ B=(b,b - \Delta) $ and $ \bar B=(\beta, \beta - \bar \Delta) $ with
$ \Delta < \bar \Delta $, one can show that $R_{\bar \Delta} \le L_{\Delta}$ (see
Remark~\ref{NonDifferentiableCaseRemark2}). Hence the respective intervals $[L_{\Delta},\ R_{\Delta}]$ and
$[L_{\bar \Delta},\ R_{\bar \Delta}]$ may intersect each other only at their common boundary.
In particular, a small perturbation, let us say, on $ \bar \Delta $ will imply that the
corresponding average sex ratios of historically adapted populations must take its values outside the interval
$(L_{\bar \Delta},\ R_{\bar \Delta})$. One might interpret this fact
as a kind of instability of average sex ratios for points of non-differentiability of $ \Gamma_C $.

To illustrate the above discussion, let us consider an extreme situation. Suppose that there exists
some point $\bar B=(\beta,\beta - \bar \Delta)$ such that $L_{\bar\Delta}=0$ and $R_{\bar\Delta}=1$.
It is straightforward that, for any other point $B = (b, b - \Delta) $ with $ \Delta \ne \bar \Delta $,
we have
$$\text{either}\qquad\lim_{n \to \infty} \frac{1}{n}\sum_{k = 1}^n\frac{x^B_k}{x^B_k + y^B_k}=0\qquad\text{or}
\qquad \lim_{n \to \infty} \frac{1}{n}\sum_{k = 1}^n\frac{x^B_k}{x^B_k + y^B_k}=1,$$
for all adapted history $ \omega^B=\binom{x^B_i}{y^B_i}_{i\in\N} \in \Omega $ with respect to $ C_B $.
So only for cost functions $ C_{(\bar b, \bar b - \bar \Delta)} $ it could exist
historically adapted populations with two genders coexisting as time goes by.

A brief concluding remark is that, for the special case of the non-perturbed maintenance cost function $ C=C_{(0,0)}$,
there always exist constants $ 0 \le L_0 \le R_0 \le 1 $ such that average sex ratios of historically
adapted populations for $ C $ either converge to some point of the interval $ [L_0,\ R_0] $ or take values in this interval in a periodic or random way, without convergence.

\section{Discussion}\label{Discussion}

We proposed a new theoretical paradigm for sex-ratio problems: reproductive interactions are supposed to have
interconnectedness governed by a maintenance cost function depending explicitly on the size of male and female populations.
By considering an environment with finite resources, we are compelled to take three hypotheses on the maintenance
cost function: there exists a minimum cost (or a maximum benefit) which could be achieved by some population; the cost
diverges to infinity as the latest population increases; the cost is dominated by the current population size.
In this framework, we were able to show that there always exist historically adapted populations, which are
populations minimizing the maintenance cost along time. Furthermore, the main result established here has guaranteed
that, for almost all linearly perturbed maintenance cost function, the average sex ratios of the respective historically
adapted populations do converge.

We emphasize that the proposed model has mainly a heuristic focus and does not try to explain mechanisms which could lead some
population to be historically adapted. One obviously recognizes the fundamental importance of biological researches into a
possible validation of such an approach. The main point seems to be an affirmative answer to the challenge of representing
interactions of an actual biological population through a maintenance cost function with the desired features.

We notice that formalism developed in previous sections and its consequences can be immediately generalized to various other situations.
We would like to briefly discuss some of them.

\paragraph{Cost dependence on a finite number of consecutive censuses.}
One may consider a maintenance cost function $ C : \N^{2L} \to \R $ depending on
$ L \ge 2 $ consecutive population census, which can be seen again depending
on two coordinates $ C : \N^{2(L - 1)} \times \N^{2(L - 1)} \to \R $ via the identification
\begin{center}
$ C \left( \binom{x_1}{y_1}, \binom{x_2}{y_2}, \ldots, \binom{x_L}{y_L} \right) =
C \Big( \big( \binom{x_1}{y_1} \;\; \binom{x_2}{y_2} \;\; \ldots \;\; \binom{x_{L-1}}{y_{L-1}} \big)^T \hspace{-.1cm}, \, \big( \binom{x_2}{y_2} \;\; \binom{x_3}{y_3} \;\; \ldots \;\; \binom{x_L}{y_L} \big)^T \Big). $
\end{center}
One may now use such a point of view to rewrite the assumptions on the cost function and to easily obtain the analogous results for historically adapted populations.

\paragraph{Age-structured population models.}
We can introduce, for instance, the quantities of newborns of each gender.
Hence, if newborns are included as a cost factor, then, for almost all perturbed cost function, there shall
exist an identifiable primary sex ratio in historically adapted populations. More generally, one may analyze
a maintenance cost function depending on $ M \ge 2 $ age classes for both genders, namely, a function
$ C : \N^{2M} \times \N^{2M} \to \R $,
$ C\big(\binom{(x_1, x_2, \ldots, x_M)}{(y_1, y_2, \ldots, y_M)}, \binom{(\bar x_1, \bar x_2, \ldots, \bar x_M)}{(\bar y_1, \bar y_2, \ldots, \bar y_M)}\big) $.

\paragraph{Sequential and simultaneous hermaphroditism.} By adding variables in our model, we can without difficulty extend our study
to the occurrence at the same time of separate and combined sexes in some biological system. For instance, if $ X $
and $ Y $ denote the sizes of the dioecious part of the population, concerning the sex reversal part, let $ h_x $
and $ H_y $ be the number of sequential hermaphrodites reproducing early in life, respectively, as females and as males.
The number of individuals after sex changes will be indicated then by $ h^y $ and $ H^x $, respectively. At last,
let $ Z $ denote the number of individuals having simultaneously both male and female reproductive organs.
Therefore, the distribution of dioecy versus hermaphroditism can be investigated, for example, by the means of a maintenance cost function
$ C : \N^7 \times \N^7 \to \R $,
\begin{center}
$ C \Big( \big( X \;\; Y \;\; h_x \;\; h^y \;\; H_y \;\; H^x \;\; Z \big)^T \hspace{-.1cm}, \,
\big( \bar X \;\; \bar Y \;\; \overline{h_x} \;\; \overline{h^y} \;\; \overline{H_y} \;\; \overline{H^x} \;\; \bar Z
\big)^T \Big). $
\end{center}

\paragraph{Periodic cost functions.} The population maintenance cost may vary periodically along time.
Such a situation corresponds to consider a family of cost functions $ C_1, C_2, \ldots, C_N : \N^2 \times \N^2 \to \R $
and, for any finite history  $\omega{[m,n]}=\binom{x_i}{y_i}_{m\leq i\leq n}$, a total cost
$ \sum_{k = m + 1}^n C_{k - 1 \; (\text{mod } N)} ( \binom{x_{k - 1}}{y_{k - 1}}, \binom{x_k}{y_k} ) $.
The analysis of the periodic case may be reduced to our time-independent case just by introducing a
conjunction cost map
\begin{center}
$ C \left(\binom{x}{y}, \binom{\bar x}{\bar y} \right) :=
\inf_{\binom{x_1}{y_1}, \binom{x_2}{y_2}, \ldots, \binom{x_{N - 1}}{x_{N - 1}} \in \N^2}
\left[C_1\left(\binom{x}{y}, \binom{x_1}{y_1}\right) + C_2\left(\binom{x_1}{y_1}, \binom{x_2}{y_2}\right) + \ldots +
C_N\left(\binom{x_{N-1}}{y_{N-1}}, \binom{\bar x}{\bar y}\right)\right]. $
\end{center}

It would certainly be very interesting to take into account two or more generalizations at the same application. For instance, an age-structured model with
sex reversal individuals might help to understand whether there should exist a special age for sex change. Here again, from the biological perspective,
in this particular situation as well as in many other examples of potential applications, such a form of modeling requires first to describe more explicitly properties of a maintenance cost function regulating the reproductive
interactions of a given population.

\section*{Appendix A}\label{Appendix}

In this appendix we shall give the mathematical proofs of the results previously discussed.
From now on, let $ C : \N^2 \times \N^2 \to \R $ be a function verifying assumptions
\eqref{BoundedBelow}, \eqref{FiniteRessources1} and \eqref{FiniteRessources2}. The main idea is to associate to such a maintenance cost
function a kind of Lax-Oleinik operator and use its fixed points to construct historically adapted populations as well as to study their asymptotic
properties. Lax-Oleinik fixed point techniques have been successfully explored in several areas. A very important example comes from calculus of
variations: the Lax-Oleinik semigroup, which is essential in Fathi's weak KAM theory for Lagrangian mechanics (see \cite{Fathi}).

First of all, we need to introduce the spaces on which our Lax-Oleinik operator will act.
Denote by $\ell^\infty(\N^2)$ the set of all real valued bounded functions on $\N^2$, and
denote by $\ell^\infty(\N^2) / \R $ the set of all real valued bounded functions on $\N^2$ modulo constants, that is,
the set of equivalence classes $ [f]:=\{g\in\ell^\infty(\N^2):\ f-g\equiv cte\}$.
Both $\ell^\infty(\N^2)$ and $\ell^\infty(\N^2) / \R $ are Banach spaces with norms
$ \| f \|_\infty:=\sup_{\binom{x}{y}\in\N^2} |f\binom{x}{y}|$ and $ \| [f] \|_{\#} := \inf_{\kappa \in \R} \| f + \kappa \|_{\infty} $, respectively.

\begin{defn}[The Lax-Oleinik operator] \label{Lax-Oleinik}
Let $ T_C $ be the operator acting on $ \ell^\infty(\N^2) $ by
\begin{equation*}\label{Operator}
T_C f \binom{x}{y}:=\inf_{\binom{\bar x}{\bar y}\in\N^2}\left[C\left(\binom{x}{y},\binom{\bar x}{\bar y}\right)+f\binom{\bar x}{\bar y}\right],
\quad \forall \; \binom{x}{y}\in\N^2,
\end{equation*}
whenever $ f $ is a real valued bounded function on $ \N^2 $.
\end{defn}

Notice that the operator $ T_C $ is well defined, since
\begin{equation*}\label{BoundOperator}
\inf C - \| f \|_\infty \le T_C f \le \sup_{\binom{x}{y}\in\N^2} C\left(\binom{x}{y},\binom{0}{0}\right) + \| f \|_\infty \le
\mathfrak K_c + C\left(\binom{0}{0},\binom{0}{0}\right) + \| f \|_\infty.
\end{equation*}
The above upper bound also implies that the infimum in the definition of the Lax-Oleinik operator is actually a minimum.
Indeed, as the cost function $ C $ verifies hypothesis~\eqref{FiniteRessources1}, $ T_C f \binom{x}{y} $ will be
selected among a finite number of values. Furthermore, note that $ T_C (f + \kappa) = T_C (f) + \kappa $ for any $ \kappa \in \R $. Thus, we can consider $T_C$ acting on $ \ell^\infty(\N^2) / \R $.

\begin{theo}\label{TheoFixedPoint}
The operator $ T_C : \ell^\infty(\N^2) / \R \to \ell^\infty(\N^2) / \R $ has a fixed point.
\end{theo}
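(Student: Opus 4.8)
The plan is to obtain the fixed point through the Schauder--Tychonoff theorem, working with the topology of pointwise convergence rather than the norm topology. This is forced upon us because, as is readily checked, $T_C$ is only nonexpansive and not contractive for $\| \cdot \|_\infty$, so the Banach principle is unavailable, while $\ell^\infty(\N^2)/\R$ is infinite dimensional. The real engine of the argument is the observation that hypothesis~\eqref{FiniteRessources2} forces a uniform bound on the \emph{oscillation} of $T_C f$, independently of $f$. Indeed, \eqref{FiniteRessources2} says precisely that $\left| C\left(\binom{x}{y},\binom{\bar x}{\bar y}\right) - C\left(\binom{x'}{y'},\binom{\bar x}{\bar y}\right) \right| \le \mathfrak K_C$ for every $\binom{\bar x}{\bar y}$, and passing this inequality to the infima defining $T_C f\binom{x}{y}$ and $T_C f\binom{x'}{y'}$ yields $\left| T_C f\binom{x}{y} - T_C f\binom{x'}{y'} \right| \le \mathfrak K_C$ for all states. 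Hence $\operatorname{osc}(T_C f) \le \mathfrak K_C$ no matter what $f$ is.

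With this in hand I would build a compact convex invariant set. Set
\[
K := \left\{ g \in \ell^\infty(\N^2) : g\binom{0}{0} = 0 \text{ and } \left| g\binom{x}{y} \right| \le \mathfrak K_C \ \ \forall \, \binom{x}{y} \in \N^2 \right\}.
\]
Regarded as a subset of $\prod_{\binom{x}{y} \in \N^2}[-\mathfrak K_C, \mathfrak K_C]$, the set $K$ is closed under pointwise limits, hence compact by Tychonoff's theorem (and metrizable, since $\N^2$ is countable); it is clearly convex and nonempty. I then normalise $T_C$ by defining $\tilde T : K \to \ell^\infty(\N^2)$ through $\tilde T g := T_C g - (T_C g)\binom{0}{0}$. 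By construction $\tilde T g\binom{0}{0} = 0$, while the oscillation bound gives $\operatorname{osc}(\tilde T g) = \operatorname{osc}(T_C g) \le \mathfrak K_C$, so that $\left| \tilde T g\binom{x}{y} \right| = \left| \tilde T g\binom{x}{y} - \tilde T g\binom{0}{0} \right| \le \mathfrak K_C$. Together these show $\tilde T(K) \subseteq K$.

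The step I expect to be the most delicate is the continuity of $\tilde T$ for pointwise convergence, and this is exactly where hypothesis~\eqref{FiniteRessources1} is needed. Fix a state $\binom{x}{y}$ and a sequence $g_n \to g$ in $K$. Since every element of $K$ is bounded in modulus by $\mathfrak K_C$, hypothesis~\eqref{FiniteRessources1} guarantees that $C\left(\binom{x}{y},\binom{\bar x}{\bar y}\right) + g_n\binom{\bar x}{\bar y} \to +\infty$ as $\bar x + \bar y \to \infty$, uniformly in $n$; consequently the infima defining $T_C g_n\binom{x}{y}$ and $T_C g\binom{x}{y}$ are all attained over one and the same finite set $F$ of states. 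On $F$ the assignment $g \mapsto \min_{\binom{\bar x}{\bar y} \in F}\left[ C\left(\binom{x}{y},\binom{\bar x}{\bar y}\right) + g\binom{\bar x}{\bar y} \right]$ depends on only finitely many coordinates and is therefore continuous for pointwise convergence, whence $T_C g_n\binom{x}{y} \to T_C g\binom{x}{y}$; subtracting the value at $\binom{0}{0}$ preserves this, so $\tilde T g_n \to \tilde T g$ pointwise.

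Schauder--Tychonoff then supplies a fixed point $u \in K$ of $\tilde T$, that is $T_C u - (T_C u)\binom{0}{0} = u$. Writing $\gamma := (T_C u)\binom{0}{0} \in \R$, this reads $T_C u = u + \gamma$, which is exactly the statement that $[u]$ is fixed by the induced operator $T_C$ on $\ell^\infty(\N^2)/\R$; as a bonus, this $\gamma$ is the constant of the functional equation~\eqref{FixedPoint}. The whole difficulty is thus concentrated in manufacturing a compact invariant set, and the two finite-resource hypotheses do precisely this: \eqref{FiniteRessources2} confines the image to functions of bounded oscillation (yielding compactness) and \eqref{FiniteRessources1} turns the infimum into a minimum over a finite set (yielding continuity), after which the existence of a fixed point is automatic. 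An alternative route, should one prefer to avoid a compactness argument, would be the vanishing-discount method: solve the contractive discounted equations $f_\lambda = \inf_{\binom{\bar x}{\bar y}}\left[ C + \lambda f_\lambda \right]$ for $\lambda \in (0,1)$ and extract a limit of $f_\lambda - f_\lambda\binom{0}{0}$ as $\lambda \to 1^-$, the same two hypotheses again providing the uniform oscillation control needed to pass to the limit.
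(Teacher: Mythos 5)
Your proof is correct, and it takes a genuinely different route from the paper's. Both arguments hinge on the same key estimate --- hypothesis~\eqref{FiniteRessources2} forces $\operatorname{osc}(T_C f) \le \mathfrak K_C$ uniformly in $f$ --- but they feed it into different fixed point mechanisms. The paper does not use Schauder--Tychonoff: it precomposes $T_C$ with the multiplication $M_\lambda$ by $1-\lambda$, so that $T_C \circ M_\lambda$ is a genuine contraction on $\ell^\infty(\N^2)/\R$ with a fixed point $[u_\lambda]$ by the Banach principle, notes that the oscillation bound gives $\big\| [u_\lambda] \big\|_{\#} \le \mathfrak K_C/2$ uniformly in $\lambda$, and then lets $\lambda \to 0$, passing the fixed point property to an accumulation point via the $1$-Lipschitz continuity of $T_C$; this is precisely the vanishing-discount alternative you sketch in your closing sentence (with discount factor $1-\lambda$ in place of $\lambda$). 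Comparing the two: the paper's route is more economical in tools, needing only the contraction principle plus a limiting argument, whereas yours invokes the heavier Schauder--Tychonoff theorem; on the other hand, your route makes the compactness explicit and cheap (a closed subset of $\prod [-\mathfrak K_C, \mathfrak K_C]$ in the product topology) and isolates exactly where \eqref{FiniteRessources1} enters, namely in reducing the infimum to a minimum over a finite set independent of $g \in K$, which yields continuity for pointwise convergence. This is a real advantage, because the one step the paper leaves implicit --- the existence of an accumulation point of the bounded family $\{[u_\lambda]\}$ --- is not automatic in the norm topology of the infinite-dimensional quotient space; a careful justification would normalize representatives at $\binom{0}{0}$, extract a pointwise limit by a diagonal argument, and pass to the limit in the discounted equations, that is, it would use the very same product-topology compactness and finite-reduction continuity considerations you develop. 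Finally, your identification $T_C u = u + \gamma$ with $\gamma = (T_C u)\binom{0}{0}$ agrees with the functional equation~\eqref{FixedPoint} and with Remark~\ref{TheoFixedPoint2}.
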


\begin{proof}
We remark first that
$$ 2 \big\| [f] \big\|_{\#} = \text{osc}(f) := \sup_{\binom{x}{y}, \binom{\bar x}{\bar y}\in\N^2} \left[f\binom{x}{y}-f\binom{\bar x}{\bar y}\right],$$
where $f$ is any element of the equivalence class $[f]$.

Given $ \lambda \in (0, 1) $, let $ M_\lambda $ be the multiplication by $ 1 - \lambda $ acting on $ \ell^\infty(\N^2) / \R $.
The operator $ T_C \circ M_\lambda $ is a contraction and has therefore a fixed point $ [u_\lambda] \in \ell^\infty(\N^2) / \R $,
that is, $ (T_C \circ M_\lambda)[u_\lambda]=T_C[(1-\lambda)u_\lambda]=[u_\lambda]$.
Hence, observe that
$$ \big\| [u_\lambda] \big\|_{\#} = \frac{1}{2}\text{osc}\left((T_C \circ M_\lambda)u_\lambda\right) =
\frac{1}{2}\text{osc}\left(T_C (1 - \lambda)u_\lambda\right) \le \frac{\mathfrak K_C}{2}. $$
In particular, the family $ \{[u_\lambda]\}_{\lambda \in (0, 1)} $ has an accumulation point $[u] \in \ell^\infty(\N^2) / \R  $ as $ \lambda $ goes to zero.
Choose $\lambda_i\to 0$ such that $[u_{\lambda_i}]\to [u]$ as $i\to \infty$. Since $ T_C $ is 1-Lipschitz, we have
$$T_C [u] = \lim_{i\to+\infty}T_C [(1 - \lambda_i)u_{\lambda_i}] = \lim_{i\to+\infty}[u_{\lambda_i}]=[u]. $$
\end{proof}

\begin{remark}\label{TheoFixedPoint2}
We have $ T_C [u] = [u] $ for some equivalence class $ [u] \in \ell^\infty(\N^2) / \R $.
Therefore, if $u\in \ell^\infty(\N^2)$ is an element of the equivalence class $[u]$, it follows that
\begin{equation*}
 u\binom{x}{y} + \gamma = T_C u\binom{x}{y} =
\min_{\binom{\bar x}{\bar y}\in\N^2}\left[C\left(\binom{x}{y},\binom{\bar x}{\bar y}\right)+u\binom{\bar x}{\bar y}\right],
\quad \forall \; \binom{x}{y}\in\N^2,
\end{equation*}
for some real constant $ \gamma $. Recall that such a functional equation allows to construct an adapted history
$\omega=\binom{x_i}{y_i}_{i\in\N}$ starting with any given initial population $\binom{x_0}{y_0}$. In particular, it is easy to see that,
for any adapted history $\omega=\binom{x_i}{y_i}_{i\in\N}$ and all arbitrary history $\bar \omega = \binom{\bar x_i}{\bar y_i}_{i\in\N}$,
\begin{multline*}
\lim_{n \to \infty} \frac{1}{n}\sum_{k=1}^n \left[C\left(\binom{x_{k-1}}{y_{k-1}},\binom{x_k}{y_k}\right) + u\binom{x_k}{y_k} - u\binom{x_{k-1}}{y_{k-1}}\right]
= \\ = \gamma \le
\liminf_{n \to \infty} \frac{1}{n} \sum_{k=1}^n \left[C\left(\binom{\bar x_{k-1}}{\bar y_{k-1}},\binom{\bar x_k}{\bar y_k}\right)+ u\binom{\bar x_k}{\bar y_k} - u\binom{\bar x_{k-1}}{\bar y_{k-1}}\right].
\end{multline*}
Thus, one clearly has~\eqref{MeanMaintenanceCost} and
\begin{equation*}
\gamma = \inf_{\binom{\bar x_i}{\bar y_i}_{i\in\N} \in \, \Omega} \,
\liminf_{n \to \infty} \frac{1}{n} \sum_{k=1}^n C\left(\binom{\bar x_{k-1}}{\bar y_{k-1}},\binom{\bar x_k}{\bar y_k}\right).
\end{equation*}
\end{remark}

Now, given $ A \in \R^2 $, consider the perturbed maintenance cost

\begin{equation*}
C_A\left(\binom{x}{y}, \binom{\bar x}{\bar y}\right):=C\left(\binom{x}{y}, \binom{\bar x}{\bar y}\right)+\left\langle A,\left(\frac{\bar{x}}{\bar{x}+\bar{y}},\frac{\bar{y}}{\bar{x}+\bar{y}}\right)\right\rangle,
\end{equation*}
with the convention that zero over zero is equal to zero.
It is straightforward that $ C_A $ verifies~\eqref{BoundedBelow}, \eqref{FiniteRessources1} and~\eqref{FiniteRessources2}, and $ \mathfrak K_{C_A} = \mathfrak K_C $. Furthermore
\begin{equation}\label{C-distance}
\left\| C_A - C_B \right\|_\infty =
\sup_{\binom{\bar x}{\bar y}\in\N^2}
\left|\left\langle A-B,\left(\frac{\bar{x}}{\bar{x}+\bar{y}},\frac{\bar{y}}{\bar{x}+\bar{y}}\right)\right\rangle\right|
\le \| A-B \|, \quad \forall \, A, B \in \R^2.
\end{equation}

Let $\Gamma_C:\R^2\to\R$ be the map defined by
$$ \Gamma_C(A) := \inf_{\binom{\bar x_i}{\bar y_i}_{i\in\N} \in \, \Omega} \,
\liminf_{n \to \infty} \frac{1}{n} \sum_{k=1}^n C_A\left(\binom{\bar x_{k-1}}{\bar y_{k-1}},\binom{\bar x_k}{\bar y_k}\right). $$
Obviously $ \Gamma_C(0,0) = \gamma $. Moreover, we have that

\begin{prop}\label{Concave}
The function $\Gamma_C$ is concave.
\end{prop}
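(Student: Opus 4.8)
The plan is to prove concavity directly from the definition of $\Gamma_C$ as an infimum over histories, exploiting the fact that the perturbation depends affinely on $A$. First I would observe that, for fixed populations $\binom{x}{y}$ and $\binom{\bar x}{\bar y}$, the map $A \mapsto C_A(\binom{x}{y}, \binom{\bar x}{\bar y})$ is affine, since the correction term $\langle A, (\frac{\bar x}{\bar x + \bar y}, \frac{\bar y}{\bar x + \bar y})\rangle$ is linear in $A$. Consequently, for any $t \in [0,1]$ and $A_1, A_2 \in \R^2$, writing $A = t A_1 + (1-t) A_2$, one has the pointwise identity $C_A = t\, C_{A_1} + (1-t)\, C_{A_2}$ on $\N^2 \times \N^2$.

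Next I would transfer this affine identity to the Ces\`aro sums. For any history $\bar{\omega} = \binom{\bar x_i}{\bar y_i}_{i \in \N}$ and any $n$, the average $\frac{1}{n}\sum_{k=1}^n C_A(\binom{\bar x_{k-1}}{\bar y_{k-1}}, \binom{\bar x_k}{\bar y_k})$ equals $t$ times the corresponding average for $C_{A_1}$ plus $(1-t)$ times the one for $C_{A_2}$. Taking $\liminf$ as $n \to \infty$, the superadditivity of the lower limit, together with the nonnegativity of the weights $t$ and $1-t$, yields
$$
\liminf_{n} \frac{1}{n}\sum_{k=1}^n C_A
\ \ge\ t\,\liminf_{n} \frac{1}{n}\sum_{k=1}^n C_{A_1}
+ (1-t)\,\liminf_{n} \frac{1}{n}\sum_{k=1}^n C_{A_2},
$$
where I have suppressed the common arguments $(\binom{\bar x_{k-1}}{\bar y_{k-1}}, \binom{\bar x_k}{\bar y_k})$ for readability.

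The crucial point, and the step where one must be careful, is that the individual functional $A \mapsto \liminf_{n} \frac{1}{n}\sum_{k} C_A$ attached to a single history $\bar{\omega}$ need \emph{not} be concave, because a lower limit of affine functions is in general neither convex nor concave; so I cannot simply invoke ``an infimum of concave functions is concave.'' The way around this is to bound the two terms on the right by $\Gamma_C(A_1)$ and $\Gamma_C(A_2)$ respectively, using that each $\Gamma_C(A_j)$ is \emph{by definition} the infimum over all histories of the liminf of the corresponding average. This produces, for \emph{every} history $\bar{\omega}$, the inequality $\liminf_{n} \frac{1}{n}\sum_{k} C_A \ge t\,\Gamma_C(A_1) + (1-t)\,\Gamma_C(A_2)$.

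Finally, taking the infimum over all histories $\bar{\omega}$ on the left-hand side gives exactly $\Gamma_C(A) \ge t\,\Gamma_C(A_1) + (1-t)\,\Gamma_C(A_2)$, which is concavity. I expect no serious obstacle beyond keeping track of the direction of the liminf inequality and resisting the tempting but invalid shortcut through concavity of the per-history functionals; the affineness of $A \mapsto C_A$ and the characterization of $\Gamma_C$ as an infimum do all the real work.
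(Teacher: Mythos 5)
Your proof is correct, but it follows a genuinely more elementary route than the paper's. The paper anchors its computation at an \emph{adapted} history for the combined cost $C_{tA+(1-t)B}$: along such a history the Ces\`aro averages of $C_{tA+(1-t)B}$ converge exactly to $\Gamma_C(tA+(1-t)B)$ (this is \eqref{MeanMaintenanceCost}), and bounding the averages of $C_A$ and of $C_B$ from below by $\Gamma_C(A)$ and $\Gamma_C(B)$ then yields concavity in one chain of inequalities. That argument, however, consumes a nontrivial input: the existence of adapted histories, which rests on the Lax--Oleinik fixed point (Theorem~\ref{TheoFixedPoint} and Remark~\ref{TheoFixedPoint2}). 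You dispense with this machinery entirely: for an \emph{arbitrary} history you establish $\liminf_{n} \frac{1}{n}\sum_{k=1}^{n} C_A \ge t\,\Gamma_C(A_1) + (1-t)\,\Gamma_C(A_2)$ using only the pointwise identity $C_{tA_1+(1-t)A_2} = t\,C_{A_1} + (1-t)\,C_{A_2}$, superadditivity of $\liminf$ (legitimate here because all averages are bounded below, so no $\infty-\infty$ ambiguity arises), and the definition of each $\Gamma_C(A_j)$ as an infimum; then you take the infimum over histories on the left. What your approach buys is independence from the existence of minimizers: it applies verbatim to any functional defined as an infimum over histories of lower limits of Ces\`aro averages of an affinely parametrized cost. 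What the paper's approach buys is brevity, given that the fixed-point theorem is proved and needed elsewhere anyway.

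One parenthetical claim in your write-up is false, although it is not load-bearing: a lower limit of affine functions is in fact always concave. Indeed, $\liminf_n f_n = \lim_{m\to\infty} \inf_{n\ge m} f_n$ is an increasing pointwise limit of the functions $\inf_{n\ge m} f_n$; each of these is an infimum of affine functions, hence concave, and pointwise limits preserve the concavity inequality. Consequently the per-history functional $A \mapsto \liminf_n \frac{1}{n}\sum_{k=1}^n C_A$ \emph{is} concave, and the ``tempting shortcut'' you reject --- obtaining concavity of $\Gamma_C$ as an infimum of concave functions --- is actually valid and would give a third correct proof. Your caution costs nothing, since the route you chose is sound, but the reason you give for avoiding the shortcut is not.
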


\begin{proof}
Given $ A, B \in \R^2 $ and $ t \in [0, 1] $, let $\omega=\binom{x_i}{y_i}_{i\in\N}$
be an adapted history with respect to the cost function $C_{tA+(1-t)B}$.
Therefore, by the very definition of $ \Gamma_C $, we get
\begin{eqnarray*}
t\Gamma_C(A)+(1-t)\Gamma_C(B)
& \le & \lim_{n \to \infty} \frac{1}{n} \sum_{k = 1}^n
\left[ tC_A\left(\binom{x_{k-1}}{y_{k-1}},\binom{x_k}{y_k}\right)+(1-t)C_B\left(\binom{x_{k-1}}{y_{k-1}},\binom{x_k}{y_k}\right)\right] \\
&=&
\lim_{n \to \infty}\frac{1}{n}\sum_{k = 1}^n C_{tA + (1- t)B}\left(\binom{x_{k-1}}{y_{k-1}},\binom{x_k}{y_k}\right)
= \Gamma_C(tA + (1-t)B).
\end{eqnarray*}
\end{proof}

\begin{remark}\label{Concave2}
A real valued concave function on $\R^n$ is locally Lipschitz continuous\footnote{Actually, it is not hard to directly check that,
for all $ A, B \in \R^2$, we have $ | \Gamma_C(A) - \Gamma_C(B) | \le \| A-B \| $.} and hence, by Rademacher's theorem, differentiable almost everywhere with respect to the Lebesgue measure. Thus, Theorem \ref{Concave} implies that Lebesgue-almost every $ A \in \R^2 $ is a point of differentiability
of the map $ \Gamma_C $.
As a matter of fact, one may be a little more precise on the description of the points of differentiability of $ \Gamma_C $.
To that end, notice that we can write
$$ \Gamma_C(a_1, a_2) = f_C(a_1 - a_2) + \gamma + a_2, \quad \forall \, (a_1, a_2) \in \R^2, $$
with $ f_C : \R \to \R $ defined by
$$ f_C (\Delta) := \inf_{\binom{\bar x_i}{\bar y_i}_{i\in\N} \in \, \Omega} \, \liminf_{n\to\infty}
\frac{1}{n}\sum_{k=1}^n\left[C\left(\binom{\bar x_{k-1}}{\bar y_{k-1}},\binom{\bar x_k}{\bar y_k}\right)+\frac{\bar x_k}{\bar x_k+\bar y_k}\Delta-\gamma\right],
\quad \forall \, \Delta\in\R.$$
Certainly $ f_C(0) = 0 $. Moreover, as in the proof of Proposition~\ref{Concave}, one may immediately verify that the function $ f_C $ is concave and
therefore differentiable almost everywhere with respect to the Lebesgue measure on the real line. So we conclude that, for Lebesgue-almost every $ \Delta\in\R $,
the map $ \Gamma_C $ is indeed differentiable along the straight line $ \{(a, a - \Delta) : a \in \R\} $.
\end{remark}

The next theorem shows that points of differentiability of $ \Gamma_C $ play a crucial role on the study of average sex ratio for historically adapted
populations. Its proof is very similar to Gomes' argument for the asymptotic behavior of optimal trajectories defined by discrete viscosity solutions
(see \cite{Gomes}).

\begin{theo}\label{AverageSexRatioTheorem}
Let $ A = (a_1, a_2) \in \R^2 $ be a point of differentiability of $\Gamma_C$, and let $\omega^A=\binom{x^A_i}{y^A_i}_{i\in\N}$ be an adapted history
for the maintenance cost $C_A$. Then, one has
\begin{equation*}
\lim_{n \to \infty} \frac{1}{n}\sum_{k = 1}^n\left(\frac{x^A_k}{x^A_k + y^A_k}, \frac{y^A_k}{x^A_k + y^A_k}\right) =
\nabla \Gamma_C(A) = \big(f_C'(a_1 - a_2), 1 - f_C'(a_1 - a_2)\big).
\end{equation*}
\end{theo}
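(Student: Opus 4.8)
The plan is to prove the two equalities separately. The second equality, $\nabla \Gamma_C(A) = \big(f_C'(a_1 - a_2),\, 1 - f_C'(a_1 - a_2)\big)$, is immediate: differentiating the identity $\Gamma_C(a_1,a_2) = f_C(a_1-a_2) + \gamma + a_2$ from \eqref{Gama e f} shows that differentiability of $\Gamma_C$ at $A$ forces $f_C$ to be differentiable at $a_1 - a_2$, and the chain rule yields the partial derivatives $f_C'(a_1-a_2)$ and $1 - f_C'(a_1-a_2)$. All the work therefore lies in the first equality, which identifies $\nabla\Gamma_C(A)$ with the asymptotic time-average of the sex-ratio vector. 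Following Gomes' argument, I would establish this through a two-sided subgradient estimate that is then sharpened using the hypothesis of differentiability.

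The central step is a subgradient inequality. Write $s_k := \big(x_k^A/(x_k^A+y_k^A),\, y_k^A/(x_k^A+y_k^A)\big)$ for the sex-ratio vector of $\omega^A$ at time $k$ (with the convention zero over zero equal to zero) and $\sigma_n := \frac{1}{n}\sum_{k=1}^n s_k$ for its Ces\`aro average; since $\sigma_n \in [0,1]^2$, all the lower and upper limits below are finite. Fix $V \in \R^2$. As $\omega^A$ is one particular history, it is an admissible competitor in the infimum defining $\Gamma_C(A+V)$, and along $\omega^A$ one has $C_{A+V}(\cdot,\cdot) = C_A(\cdot,\cdot) + \langle V, s_k\rangle$, so the Ces\`aro average of the $C_{A+V}$-cost splits as the Ces\`aro average of the $C_A$-cost plus $\langle V, \sigma_n\rangle$. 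By \eqref{MeanMaintenanceCost} and Remark~\ref{TheoFixedPoint2} applied to $C_A$, the former converges to $\Gamma_C(A)$, whence
$$\Gamma_C(A+V) \le \liminf_{n\to\infty}\Big[\tfrac{1}{n}\sum_{k=1}^n C_A\big(\ldots\big) + \langle V, \sigma_n\rangle\Big] = \Gamma_C(A) + \liminf_{n\to\infty}\langle V, \sigma_n\rangle.$$
That is, $\Gamma_C(A+V) - \Gamma_C(A) \le \liminf_{n\to\infty}\langle V, \sigma_n\rangle$ holds for every $V \in \R^2$.

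Finally I would sharpen this at the point of differentiability $A$. Replacing $V$ by $tV$ with $t > 0$, dividing by $t$, and letting $t \to 0^+$, the left-hand side tends to the directional derivative $\langle \nabla\Gamma_C(A), V\rangle$, giving $\langle \nabla\Gamma_C(A), V\rangle \le \liminf_n \langle V, \sigma_n\rangle$. Applying the same inequality to $-V$ and using $\liminf_n \langle -V, \sigma_n\rangle = -\limsup_n \langle V, \sigma_n\rangle$ yields the reverse bound $\limsup_n \langle V, \sigma_n\rangle \le \langle \nabla\Gamma_C(A), V\rangle$. Together these force $\lim_n \langle V, \sigma_n\rangle = \langle \nabla\Gamma_C(A), V\rangle$ for every $V \in \R^2$; taking $V = (1,0)$ and $V = (0,1)$ gives the componentwise convergence $\sigma_n \to \nabla\Gamma_C(A)$, which is exactly the claimed limit.

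I expect the only delicate point to be the justification that the Ces\`aro average of the $C_A$-cost along $\omega^A$ genuinely converges, rather than merely having the correct lower limit, since it is this convergence that lets me extract $\Gamma_C(A)$ additively from the $\liminf$; this is precisely what \eqref{MeanMaintenanceCost} together with Remark~\ref{TheoFixedPoint2} provide. Everything else is the standard convex-analysis mechanism of squeezing a subgradient into a gradient via differentiability, kept harmless here by the boundedness of the sex-ratio vectors in $[0,1]^2$.
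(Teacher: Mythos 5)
Your proposal is correct, but it follows a genuinely different route from the paper's own proof of this theorem. The paper's proof (patterned on Gomes' viscosity-solution argument) works at the level of the Lax--Oleinik fixed points: it takes $u_A$ with $T_{C_A}u_A = u_A + \Gamma_C(A)$ and, for each perturbation $A+hB$, a second fixed point $u_{A+hB}$; the exact identity $u_A\binom{x^A_0}{y^A_0} = \sum_{k=1}^n C_A + u_A\binom{x^A_n}{y^A_n} - n\,\Gamma_C(A)$ along $\omega^A$ is combined with the subsolution inequality satisfied by $u_{A+hB}$ along that same history, the boundary terms are absorbed via the oscillation bound coming from $\mathfrak K_C$, and one arrives at the finite-$n$ sandwich $-\tfrac{2\mathfrak K_C}{hn} + \tfrac{\Gamma_C(A+hB)-\Gamma_C(A)}{h} \le \langle B, \sigma_n\rangle \le \tfrac{2\mathfrak K_C}{hn} - \tfrac{\Gamma_C(A-hB)-\Gamma_C(A)}{h}$, after which the two limits are coupled by setting $h=m/n$, letting $n\to\infty$, and then $m\to\infty$. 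You never introduce the second fixed point or the oscillation bound: you use only the variational characterization of $\Gamma_C$ from Remark~\ref{TheoFixedPoint2} (the adapted history $\omega^A$ is an admissible competitor in the infimum defining $\Gamma_C(A+V)$) together with the genuine convergence \eqref{MeanMaintenanceCost}, which you correctly flag as the fact that legitimizes splitting the liminf additively; convex analysis then squeezes the resulting subgradient inequality into the gradient statement. Both arguments are sound. What the paper's route buys is quantitative finite-time information: the deviation of the partial averages from $\nabla\Gamma_C(A)$ is explicitly controlled by $2\mathfrak K_C/m$ plus difference-quotient errors, with constants tied to hypothesis \eqref{FiniteRessources2}. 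What your route buys is economy and extra reach: it is in substance the computation the paper itself performs later to prove Theorem~\ref{NonDifferentiableCase}, and if you stop before passing to the gradient, your inequality $\Gamma_C(A+V)-\Gamma_C(A) \le \liminf_n \langle V,\sigma_n\rangle$, read with one-sided difference quotients along the line of perturbations $\{(a,a-\Delta)\}$, already yields the bounds $L_\Delta \le \liminf \le \limsup \le R_\Delta$ of that theorem, of which the present statement is exactly the differentiable special case singled out in Remark~\ref{NonDifferentiableCaseRemark2}.
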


\begin{proof}
Let $ u_A \in \ell^\infty(\N^2) $ be such that $ u_A\binom{x^A_i}{y^A_i} + \Gamma_C(A) = T_{C_A} u_A\binom{x^A_i}{y^A_i} =
C_A(\binom{x^A_i}{y^A_i}, \binom{x^A_{i + 1}}{y^A_{i + 1}}) + u_A\binom{x^A_{i + 1}}{y^A_{i + 1}} $, for all $ i\in\N $.
Therefore, for all $n\geq 1$, we obtain
$$ u_A \binom{x^A_0}{y^A_0} = \sum_{k = 1}^n C_A\left(\binom{x^A_{k-1}}{y^A_{k-1}},\binom{x^A_k}{y^A_k}\right) + u_A \binom{x^A_n}{y^A_n} - n \Gamma_C(A). $$
For $ h > 0 $ and $ B \in \R^2 $, let $ u_{A + hB} \in \ell^\infty(\N^2) $ be such that
$ T_{C_{A + hB}} u_{A + hB} = u_{A + hB} + \Gamma_C(A + hB) $. It is straightforward that
$$ u_{A+hB} \binom{x^A_0}{y^A_0} \le
\sum_{k = 1}^n C_{A+hB}\left(\binom{x^A_{k-1}}{y^A_{k-1}},\binom{x^A_k}{y^A_k}\right) + u_{A+hB} \binom{x^A_n}{y^A_n} - n \Gamma_C(A+hB),
\quad \forall \, n \ge 1. $$

Thus, clearly
\begin{multline*}
u_{A+hB}\binom{x^A_0}{y^A_0} - u_A\binom{x^A_0}{y^A_0} \le \\
\le h \sum_{k = 1}^n\left\langle B, \left(\frac{x^A_k}{x^A_k + y^A_k}, \frac{y^A_k}{x^A_k + y^A_k}\right)\right\rangle
- n \big(\Gamma_C(A+hB)-\Gamma_C(A)\big)+u_{A+hB}\binom{x^A_n}{y^A_n} - u_A\binom{x^A_n}{y^A_n}.
\end{multline*}
Since $ u_{A+hB}\binom{x^A_0}{y^A_0}-u_{A+hB}\binom{x^A_n}{y^A_n}+u_A\binom{x^A_n}{y^A_n}-u_A\binom{x^A_0}{y^A_0}\ge
-\text{osc}(T_{C_{A + hB}}u_{A+hB})-\text{osc}(T_{C_A}u_A)\ge-2\mathfrak K_C$, it follows that
$$ - \frac{2\mathfrak K_C}{hn} + \frac{\Gamma_C(A+hB)-\Gamma_C(A)}{h} \le
\left\langle B, \frac{1}{n} \sum_{k = 1}^n \left(\frac{x^A_k}{x^A_k + y^A_k}, \frac{y^A_k}{x^A_k + y^A_k}\right)\right\rangle. $$

The same argument can be applied to $ - B $ and hence we also deduce that
$$ \left\langle B, \frac{1}{n} \sum_{k = 1}^n \left(\frac{x^A_k}{x^A_k + y^A_k}, \frac{y^A_k}{x^A_k + y^A_k}\right)\right\rangle \le
\frac{2\mathfrak K_C}{hn} - \frac{\Gamma_C(A-hB)-\Gamma_C(A)}{h}. $$

So setting $ h=m/n $ for a fixed $m>0$ and taking $ n \to \infty $, as $A$ is a point of differentiability of $\Gamma_C$, from the last two inequalities
we get that
\begin{multline*}
- \frac{2\mathfrak K_C}{m} + \langle B,\nabla\Gamma_C(A)\rangle  \le
\left\langle B,\liminf_{n \to \infty} \frac{1}{n} \sum_{k = 1}^n \left(\frac{x^A_k}{x^A_k + y^A_k}, \frac{y^A_k}{x^A_k + y^A_k}\right)\right\rangle \le \\
\le \left\langle B,\limsup_{n \to \infty} \frac{1}{n} \sum_{k = 1}^n \left(\frac{x^A_k}{x^A_k + y^A_k}, \frac{y^A_k}{x^A_k + y^A_k}\right)\right\rangle\le
\frac{2\mathfrak K_C}{m}+\langle B,\nabla\Gamma_C(A)\rangle.
\end{multline*}
Finally, taking  $ m \to +\infty $, we obtain
$$ \left\langle B,\lim_{n \to \infty} \frac{1}{n}\sum_{k = 1}^n\left(\frac{x^A_k}{x^A_k + y^A_k}, \frac{y^A_k}{x^A_k + y^A_k}\right)\right\rangle = \langle B,\nabla\Gamma_C(A)\rangle, $$
and then, since the equality holds for all $ B \in \R^2 $, we conclude that
$$ \lim_{n \to \infty} \frac{1}{n}\sum_{k = 1}^n\left(\frac{x^A_k}{x^A_k + y^A_k}, \frac{y^A_k}{x^A_k + y^A_k}\right) = \nabla\Gamma_C(A). $$
\end{proof}

Despite of being a weak form of convergence, in fact convergence in average underlines a recurrence property of the sequence. More precisely, we have the following result from real analysis.

\begin{lem}\label{LemaAnalise}
Let $ \{\alpha_i\} \subset \R $ be a sequence such that $ \lim_{n \to \infty} (1/n) \sum_{k=1}^{n} \alpha_k = \alpha \in \R $.
Let $ \I \subset \N $ be a subset of positive density, that is,
$$ \lim_{n \to \infty} \frac{\# \{k \in \I : 1 \le k \le n\}}{n} =: \beta > 0. $$
Then, for all $ \epsilon > 0 $ and for any integer $ L > 0 $, there exist $ m, n \in \I $, with $ n > m \ge L $, such that
$$ \left | \sum_{k = m + 1}^n \left( \alpha_k - \alpha \right) \right | < \epsilon. $$
\end{lem}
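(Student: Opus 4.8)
The plan is to recast the statement in terms of partial sums. Setting $ S_n := \sum_{k=1}^n (\alpha_k - \alpha) $, the hypothesis $ \frac{1}{n}\sum_{k=1}^n \alpha_k \to \alpha $ is precisely the assertion that $ S_n / n \to 0 $, i.e. $ S_n = o(n) $. Since $ \sum_{k=m+1}^n (\alpha_k - \alpha) = S_n - S_m $, the desired conclusion amounts to producing two indices $ m, n \in \I $ with $ L \le m < n $ for which $ |S_n - S_m| < \epsilon $. In other words, among the values of the partial sums sampled along $ \I $ and beyond position $ L $, two of them must be $ \epsilon $-close.

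First I would argue by contradiction, assuming that $ |S_n - S_m| \ge \epsilon $ for every pair $ m, n \in \I $ with $ L \le m < n $. Writing $ \I \cap [L, \infty) = \{k_1 < k_2 < \cdots\} $, this says the real numbers $ S_{k_1}, S_{k_2}, \ldots $ are pairwise at distance at least $ \epsilon $. Fix a large threshold $ n $ and let $ N(n) := \#\{k \in \I : L \le k \le n\} $ count how many of these sampled indices fall below $ n $. Sorting the corresponding values $ \{S_{k_j} : k_j \le n\} $ and summing the consecutive gaps, the pairwise $ \epsilon $-separation forces their total range to be at least $ (N(n) - 1)\epsilon $.

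The contradiction then comes from balancing two growth rates. On one hand, positive density gives $ N(n)/n \to \beta > 0 $, so $ N(n) \ge \tfrac{\beta}{2} n $ for all large $ n $, and the required range grows at least linearly, like $ \tfrac{\beta \epsilon}{2} n $. On the other hand, $ S_n = o(n) $ confines all the sampled values to a window that grows only sublinearly: given $ \delta > 0 $ one has $ |S_k| < \delta k \le \delta n $ once $ k \ge N_0(\delta) $, while the finitely many initial terms $ |S_k| $ with $ k < N_0 $ are bounded by a constant, so for $ n $ large enough every $ S_k $ with $ k \le n $ lies in $ [-\delta n, \delta n] $, an interval of length $ 2\delta n $. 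Choosing $ \delta $ small compared to $ \beta \epsilon $ (say $ \delta < \beta\epsilon/8 $) makes $ 2\delta n $ strictly smaller than $ (N(n) - 1)\epsilon $ for all large $ n $, which is impossible. Hence such a pair $ m < n $ must exist, and one may clearly take $ n $ arbitrarily large so that $ n > m \ge L $.

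The computation is elementary, so the only point needing genuine care — and what I expect to be the main obstacle — is the uniformization of the $ o(n) $ bound: the estimate $ |S_k| < \delta k $ holds only past an index $ N_0(\delta) $, and one must separately absorb the finitely many small-index terms into the constant window before comparing against $ \delta n $. Once the uniform bound $ |S_k| \le \delta n $ for all $ k \le n $ is secured, the clash between the linear lower bound $ (N(n) - 1)\epsilon $ on the range and the sublinear upper bound $ 2\delta n $ closes the argument at once.
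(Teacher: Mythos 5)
Your argument is correct and is essentially the paper's own proof in contrapositive form: both play the linearly many indices of $\I$ up to a threshold $n$ (supplied by positive density) against the confinement of the partial sums $S_k$, $k \le n$, to a window whose length is small compared to $n$, the paper concluding by a direct pigeonhole with $\rho < \epsilon\beta/8$ and you by the equivalent observation that $N$ pairwise $\epsilon$-separated reals have range at least $(N-1)\epsilon$. The uniformization step you flag as the main obstacle is handled in the paper simply by restricting to sampled indices in $[n_0, n_1]$, where $|S_k| \le \rho k \le \rho n_1$ holds outright, but your absorption of the finitely many initial terms into the $\delta n$ window works just as well.
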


For the convenience of the reader, we give a short proof of this lemma.

\begin{proof}
Without loss of generality, we can assume that $ \alpha = 0 $.
Fix $ \rho \in (0, \epsilon \beta / 8) $.
There exists then a positive integer $ n_0 \in \I $ such that
$$  \# \{k \in \I : 1 \le k \le n \} \ge \frac{\beta n}{2}  \quad \text{ and }
\quad  \left | \sum_{k=1}^{n} \alpha_k \right| \le \rho n, \quad \, \forall \, n \ge n_0. $$
We may suppose that $ n_0 \ge L $.
Clearly $ \{ \sum_{k=1}^{n} \alpha_k : n_0 \le n \le n_1 \} \subset [-\rho n_1, \rho n_1] $.
Considering thus
$$ n_1 \in \mathbb N \quad \text{with} \quad
n_1 > \max \left\{ n_0, \; \frac{4}{\beta} \# \{k \in \I : 1 \le k \le n_0\} \right\}, $$
we assure that
$$ \# \{k \in \I : n_0 < k \le n_1\}  =  \# \{k \in \I : 1 \le k \le n_1\} - \#\{k \in \I : 1 \le k \le n_0\}
 >  \frac{\beta n_1}{2} - \frac{\beta n_1}{4} = \frac{\beta n_1}{4}. $$
By the pigeonhole principle, there must be $ m, n \in \I \cap \{n_0, n_0 + 1, \ldots, n_1\} $,
with $ n > m $, such that
\begin{multline*}
\left| \sum_{k=m + 1}^{n} \alpha_k \right| = \left| \sum_{k=1}^{m} \alpha_k - \sum_{k=1}^{n} \alpha_k \right|
\le \frac{2 \rho n_1}{\# \{k \in \I : n_0 \le k \le n_1\} - 1} = \\
= \frac{2 \rho n_1}{\# \{k \in \I : n_0 < k \le n_1\}} < \frac{2 \rho n_1}{\beta n_1/4} = \frac{8 \rho}{\beta} < \epsilon.
\end{multline*}
\end{proof}

Concerning then the stability of the average sex ratio for historically adapted populations, one obtains an immediate consequence, namely:

\begin{cor}
Let $ A = (a_1, a_2) \in \R^2 $ be a point of differentiability of $\Gamma_C$, and let $\omega^A=\binom{x^A_i}{y^A_i}_{i\in\N}$ be an adapted history
for the maintenance cost function $C_A$. Then, for all $ \epsilon > 0 $ and $ M > 0 $, there exist infinitely many finite histories
$\omega^A[m,n]=\binom{x^A_i}{y^A_i}_{m \le i \le n}$, with $ n - m \ge M $, such that
$$\left | \sum_{k = m + 1}^n \left( \frac{y_k^A}{x_k^A + y_k^A} - \big(1 - f_C'(a_1 - a_2)\big) \right) \right | =
\left | \sum_{k = m + 1}^n \left( \frac{x_k^A}{x_k^A + y_k^A} - f_C'(a_1 - a_2) \right) \right | < \epsilon. $$
\end{cor}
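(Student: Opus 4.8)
The plan is to combine the Ces\`aro convergence furnished by Theorem~\ref{AverageSexRatioTheorem} with the recurrence property isolated in Lemma~\ref{LemaAnalise}, after first collapsing the two displayed sums into a single one.

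First I would dispose of the claimed equality of the two absolute values. For every census with $ x^A_k + y^A_k > 0 $ one has $ \frac{x^A_k}{x^A_k + y^A_k} + \frac{y^A_k}{x^A_k + y^A_k} = 1 $, whence
\[ \frac{y^A_k}{x^A_k + y^A_k} - \big(1 - f_C'(a_1 - a_2)\big) = - \left( \frac{x^A_k}{x^A_k + y^A_k} - f_C'(a_1 - a_2) \right). \]
Summing over $ m < k \le n $ shows that the two inner sums are opposite in sign, hence equal in modulus, so it suffices to control the sum built from the female proportions.

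Next I would put $ \alpha_k := \frac{x^A_k}{x^A_k + y^A_k} $ and $ \alpha := f_C'(a_1 - a_2) $. Comparing first coordinates in the conclusion of Theorem~\ref{AverageSexRatioTheorem}, namely in
\[ \lim_{n \to \infty} \frac{1}{n} \sum_{k=1}^n \left( \frac{x^A_k}{x^A_k + y^A_k}, \frac{y^A_k}{x^A_k + y^A_k} \right) = \big( f_C'(a_1 - a_2),\, 1 - f_C'(a_1 - a_2) \big), \]
yields exactly $ \lim_{n \to \infty} \frac{1}{n} \sum_{k=1}^n \alpha_k = \alpha $, which is precisely the hypothesis of Lemma~\ref{LemaAnalise}. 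The subtle point is to secure the prescribed gap $ n - m \ge M $: rather than applying the lemma with $ \I = \N $, I would fix an integer $ N \ge M $ and apply it to the index set $ \I = \{ jN : j \in \N, \, j \ge 1 \} $, which has positive density $ \beta = 1/N $. For every integer $ L > 0 $ the lemma then produces $ m, n \in \I $ with $ n > m \ge L $ and $ \big| \sum_{k = m+1}^n (\alpha_k - \alpha) \big| < \epsilon $; since $ m $ and $ n $ are distinct positive multiples of $ N $, automatically $ n - m \ge N \ge M $.

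Finally, letting $ L $ run through arbitrarily large integers forces the resulting pairs $ (m, n) $, and with them the finite histories $ \omega^A[m, n] $, to be pairwise distinct, which produces infinitely many of them meeting the required bound. I expect the genuinely delicate ingredient to be precisely this passage to a positive-density subset $ \I $ tailored to guarantee both $ n - m \ge M $ and the infinitude of admissible histories; the collapse of the two sums and the invocation of Theorem~\ref{AverageSexRatioTheorem} are routine.
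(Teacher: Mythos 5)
Your proposal is correct and follows essentially the same route as the paper: the paper's proof likewise applies Lemma~\ref{LemaAnalise} to $\alpha_i = x^A_i/(x^A_i + y^A_i)$ (with the limit supplied by Theorem~\ref{AverageSexRatioTheorem}) and to the positive-density index set $\I = \{\lceil M \rceil, 2\lceil M \rceil, 3\lceil M \rceil, \ldots\}$, which is precisely your choice $\I = \{jN : j \ge 1\}$ with $N = \lceil M \rceil$. The details you make explicit --- the collapse of the two sums into one and the infinitude of pairs obtained by letting $L$ grow --- are left implicit in the paper's one-line proof, so there is no substantive difference.
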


\begin{proof}
Just apply the previous lemma to $ \alpha_i = \frac{x^A_i}{x^A_i + y^A_i} $ and $ \I = \{ \lceil M \rceil, 2\lceil M \rceil, 3\lceil M \rceil, \ldots \} $,
where $ \lceil M \rceil $ denotes the smallest integer greater than or equal to $ M $.
\end{proof}

We recall that the one-sided derivatives of $f_C$ at a point $\Delta$ are given by
$$f_C'(\Delta+):=\lim_{H\to 0^+}\frac{f_C(\Delta+H)-f_C(\Delta)}{H}\qquad\text{and}\qquad
f_C'(\Delta-):=\lim_{H\to 0^-}\frac{f_C(\Delta+H)-f_C(\Delta)}{H}.$$
Since $f_C$ is a real valued concave function, its one-sided derivatives are defined everywhere.
The next theorem uses the one-sided derivatives of $f_C$ to find an estimate for the asymptotic average sex ratio even when it does not converge.

\begin{theo}\label{NonDifferentiableCase}
Given $ B=(b,b-\Delta)\in\R^2 $, define $ L_\Delta:=f_C'(\Delta+) $ and $ R_\Delta := f_C'(\Delta-) $. Then, for
all adapted history $ \omega^B=\binom{x^B_i}{y^B_i}_{i\in\N} \in \Omega $ with respect to $ C_B $, one has
$$ L_\Delta \le \liminf_{n \to \infty} \frac{1}{n}\sum_{k = 1}^n\frac{x^B_k}{x^B_k + y^B_k} \le
\limsup_{n \to \infty} \frac{1}{n}\sum_{k = 1}^n\frac{x^B_k}{x^B_k + y^B_k} \le R_\Delta. $$
\end{theo}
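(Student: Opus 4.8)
The plan is to run the variational argument from the proof of Theorem~\ref{AverageSexRatioTheorem} once more, but to replace the two-sided differentiability of $\Gamma_C$ by the one-sided derivatives of the concave function $f_C$, which are defined everywhere. Fix the adapted history $\omega^B=\binom{x^B_i}{y^B_i}_{i\in\N}$ for $C_B$ with $B=(b,b-\Delta)$, and abbreviate the current census density by $d_k:=\big(\frac{x^B_k}{x^B_k+y^B_k},\frac{y^B_k}{x^B_k+y^B_k}\big)$. Let $u_B\in\ell^\infty(\N^2)$ be a fixed point of $T_{C_B}$ as in Remark~\ref{TheoFixedPoint2}, with its associated constant $\Gamma_C(B)$. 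Since $\omega^B$ is adapted for $C_B$, the defining identity holds with equality along its transitions, so telescoping yields the exact relation
$$u_B\binom{x^B_0}{y^B_0}=\sum_{k=1}^n C_B\Big(\binom{x^B_{k-1}}{y^B_{k-1}},\binom{x^B_k}{y^B_k}\Big)+u_B\binom{x^B_n}{y^B_n}-n\,\Gamma_C(B),\qquad\forall\,n\ge1.$$

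Next I would bring in a comparison cost. For any $B'\in\R^2$, pick a fixed point $u_{B'}$ of $T_{C_{B'}}$ with constant $\Gamma_C(B')$; evaluating the fixed-point \emph{inequality} $u_{B'}+\Gamma_C(B')=T_{C_{B'}}u_{B'}\le C_{B'}(\cdot,\binom{x^B_k}{y^B_k})+u_{B'}\binom{x^B_k}{y^B_k}$ along the (now non-optimal) transitions of $\omega^B$ and telescoping gives the reverse inequality, namely $u_{B'}\binom{x^B_0}{y^B_0}\le\sum_{k=1}^n C_{B'}(\cdots)+u_{B'}\binom{x^B_n}{y^B_n}-n\,\Gamma_C(B')$. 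Subtracting the equality from this inequality, using $C_{B'}-C_B=\langle B'-B,d_\bullet\rangle$ at each census, and bounding the boundary terms by the oscillation estimate $\mathrm{osc}(u_{B'})+\mathrm{osc}(u_B)\le 2\mathfrak K_C$ (which follows from $\mathrm{osc}(T_Cf)\le\mathfrak K_C$ and $\mathfrak K_{C_{B'}}=\mathfrak K_C$, exactly as in the proof of Theorem~\ref{TheoFixedPoint}), I arrive at the single key estimate
$$\Gamma_C(B')-\Gamma_C(B)-\frac{2\mathfrak K_C}{n}\le\Big\langle B'-B,\ \tfrac1n\sum_{k=1}^n d_k\Big\rangle,\qquad\forall\,n\ge1.$$

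The decisive specialization is to perturb \emph{only} the first coordinate, keeping the second fixed at $b-\Delta$. Recalling from Remark~\ref{Concave2} that $\Gamma_C(a_1,a_2)=f_C(a_1-a_2)+\gamma+a_2$, the choice $B'=(b+s,\,b-\Delta)$ with $s>0$ gives $B'-B=(s,0)$, so $\langle B'-B,d_k\rangle=s\,\frac{x^B_k}{x^B_k+y^B_k}$ while $\Gamma_C(B')-\Gamma_C(B)=f_C(\Delta+s)-f_C(\Delta)$; dividing the key estimate by $s$, letting $n\to\infty$ to kill the $2\mathfrak K_C/(sn)$ remainder, and then letting $s\to0^+$ produces $L_\Delta=f_C'(\Delta+)\le\liminf_n\frac1n\sum_{k=1}^n\frac{x^B_k}{x^B_k+y^B_k}$. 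Choosing instead $B'=(b-s,\,b-\Delta)$ with $s>0$ reverses the sign of the inner product; after taking $\limsup_n$ and then $s\to0^+$ the right-hand difference quotient tends to $f_C'(\Delta-)$, giving $\limsup_n\frac1n\sum_{k=1}^n\frac{x^B_k}{x^B_k+y^B_k}\le R_\Delta$. Since $\liminf\le\limsup$ always holds, the full chain of inequalities is assembled.

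I expect the only genuinely delicate point to be the order of the two limiting operations: the estimate must be preserved for every finite $n$ and every $s>0$, with $n\to\infty$ taken \emph{first} (so the $\mathfrak K_C/(sn)$ term vanishes and the average density becomes a $\liminf$/$\limsup$) and the one-sided limit $s\to0^+$ taken afterwards; reversing this order would collapse the difference quotients before they can converge to the one-sided derivatives. A minor consistency check worth recording is that concavity of $f_C$ forces $f_C'(\Delta+)\le f_C'(\Delta-)$, i.e.\ $L_\Delta\le R_\Delta$, so the bracketing interval $[L_\Delta,R_\Delta]$ is well-formed and degenerates precisely to the single value $f_C'(\Delta)$ of Theorem~\ref{AverageSexRatioTheorem} at points where $f_C$ is differentiable.
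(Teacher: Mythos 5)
Your proof is correct, but it follows a genuinely different route from the paper's own proof of this theorem. You re-run the Gomes-style fixed-point argument of Theorem~\ref{AverageSexRatioTheorem} with one-sided increments: two fixed points $u_B$, $u_{B'}$ of the Lax--Oleinik operators, a telescoped equality along the adapted history set against a telescoped inequality for the comparison cost, the oscillation bound $2\mathfrak K_C$ for the boundary terms, and only then the limits $n\to\infty$ followed by $s\to 0^+$. Every ingredient you invoke is indeed available (the constant attached to any fixed point of $T_{C_{B'}}$ equals $\Gamma_C(B')$ by Remark~\ref{TheoFixedPoint2}, $\mathfrak K_{C_{B'}}=\mathfrak K_C$, and the monotone convergence of concave difference quotients to the one-sided derivatives justifies your order of limits), so the argument goes through, including your handling of general $B$ rather than the paper's reduction to $B=(0,0)$. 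The paper's proof is more economical: it never introduces fixed points for the perturbed costs at all. It uses only two facts: along the adapted history $\omega^B$ one has $\lim_{n}\frac1n\sum_k C_B=\Gamma_C(B)$ by \eqref{MeanMaintenanceCost}, while the definition of $\Gamma_C(A)$ as an infimum over \emph{all} histories in $\Omega$ gives $\liminf_{n}\frac1n\sum_k C_A\ge\Gamma_C(A)$ along that same history; since $C_A-C_B$ is linear in the female density, subtracting these yields the exact inequality $f_C(H)/H\le\liminf_{n}\frac1n\sum_k x_k^B/(x_k^B+y_k^B)$ for $H>0$ (and the mirror inequality for $H<0$), with no $O(1/n)$ remainder to dispose of, and one concludes by letting $H\to 0^{\pm}$. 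What your version buys in exchange for the extra machinery is a quantitative finite-$n$ estimate (the $2\mathfrak K_C/(sn)$ error) and a single argument that contains Theorem~\ref{AverageSexRatioTheorem} as the special case $f_C'(\Delta+)=f_C'(\Delta-)$; what the paper's version buys is brevity and the observation that, for this bound, the variational characterization of $\Gamma_C$ already suffices.
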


\begin{proof}
We will prove the result only for the point $B = (0,0)$, since the proof for any other point is analogous.
Let $ \omega=\binom{x_i}{y_i}_{i\in\N} \in \Omega $ be an adapted history with respect to the cost function $ C $.
For any point $A=(a,a-H)$ with $ H>0 $, we clearly have
$$ \frac{1}{n}\sum_{k=1}^n C_A\left(\binom{x_{k-1}}{y_{k-1}}, \binom{x_k}{y_k}\right)=
\frac{1}{n}\sum_{k=1}^n C\left(\binom{x_{k-1}}{y_{k-1}}, \binom{x_k}{y_k}\right) + \frac{H}{n}\sum_{k=1}^n \frac{x_k}{x_k+y_k} + a - H. $$
Since $\liminf_{n\to\infty}(1/n)\sum_{k=1}^n C_A\left(\binom{x_{k-1}}{y_{k-1}}, \binom{x_k}{y_k}\right)\ge\Gamma_C(A)$ and
$\lim_{n\to\infty}(1/n)\sum_{k=1}^n C\left(\binom{x_{k-1}}{y_{k-1}}, \binom{x_k}{y_k}\right) = \Gamma_C(0,0)=\gamma$, we obtain
$$ \Gamma_C(A) \le \gamma + H \liminf_{n\to\infty}\frac{1}{n}\sum_{k=1}^n \frac{x_k}{x_k+y_k}+ a - H. $$
Therefore, as $ f_C(H) = \Gamma_C(A) - \gamma - (a - H) $, we get that
$$ \frac{f_C(H)-f_C(0)}{H} = \frac{f_C(H)}{H} \le \liminf_{n\to\infty}\frac{1}{n}\sum_{k=1}^n \frac{x_k}{x_k+y_k}, $$
which yields
$$ f_C'(0+) \le \liminf_{n \to \infty} \frac{1}{n}\sum_{k = 1}^n\frac{x_k}{x_k + y_k}. $$
One obtains the inequality $\limsup_{n \to \infty} (1/n) \sum_{k = 1}^n x_k/(x_k + y_k) \le f_C'(0-)$ in a similar way,
using points $ A=(a,a-H)$ with $ H<0 $.
\end{proof}

\begin{remark} \label{NonDifferentiableCaseRemark2}
If $B$ is a point of differentiability of $\Gamma_C$, then the left-sided and right-sided derivatives coincide and we clearly recuperate the statement of Theorem~\ref{AverageSexRatioTheorem}.
Notice also that, since $f_C$ is concave, then both maps $ \Delta \mapsto f_C'(\Delta+) $ and $ \Delta \mapsto f_C'(\Delta-) $ are non-increasing functions and verify $ 0\le f_C'(\Delta+) \le f_C'(\Delta-) \le 1 $ for all
$ \Delta \in \R $. For any points $ B=(b,b - \Delta)$ and
$ \bar B=(\beta,\beta - \bar \Delta )$, with $ \Delta < \bar \Delta $, it follows that
$$ R_{\bar \Delta} = f_C'(\bar \Delta+) \le f_C'(\Delta-) = L_{\Delta}, $$
which implies that the respective intervals $[L_{\Delta},\ R_{\Delta}]$ and $[L_{\bar \Delta},\ R_{\bar \Delta}]$
may intersect each other only at their common boundary.
\end{remark}

\acknowledgement{The authors thank both mathematics departments of UNICAMP and UFSC for the hospitality during the preparation of this manuscript, and their graduate programs for the financial support. M. Sobottka was supported
by CNPq-Brazil grant 304457/2009-4 and FUNPESQUISA/UFSC 2009.0138.}

\footnotesize{

}

\end{document}